\documentclass[11pt]{article}
\usepackage[margin=1in]{geometry}

\usepackage[utf8]{inputenc}
\usepackage[T1]{fontenc}

\usepackage{amsmath,amssymb,amsthm,amsfonts}
\usepackage{mathtools}
\usepackage{graphicx}
\usepackage{float}      
\usepackage{subcaption}
\usepackage{arydshln}
\usepackage{authblk}
\usepackage{multirow}
\usepackage{bm}
\usepackage{mathrsfs}   
\usepackage{setspace}
\usepackage{footnote}
\makesavenoteenv{table}

\usepackage[dvipsnames]{xcolor}
\usepackage{hyperref}
\hypersetup{
    colorlinks,
    linkcolor={blue!80!black},
    urlcolor={blue!80!black},
    citecolor=teal
}
\usepackage[capitalize]{cleveref}
\usepackage{url}
\usepackage[title]{appendix}

\usepackage{thmtools}
\usepackage{thm-restate}

\usepackage{quiver}
\usepackage{tikz}  
\usepackage{tikz-cd}  
\usetikzlibrary{positioning}
\usepackage{tikz-qtree}
\usepackage[all]{xy}
\usepackage{mathdots}       
\usepackage{caption}

\usepackage{qcircuit}
\newcommand{\ket}[1]{| #1 \rangle}
\newcommand{\bra}[1]{\langle #1|}

\DeclareMathOperator{\polylog}{polylog}

\newcommand{\nc}{\newcommand}

\newcommand{\mb}{\mathbb}

\nc{\bco}[1]{\{0,1\}^{#1}}

\newcommand{\eps}{\varepsilon}

\allowdisplaybreaks

\newtheorem{thm}{Theorem}[section]

\newtheorem{cor}[thm]{Corollary}
\newtheorem{lem}[thm]{Lemma}

\theoremstyle{definition}

\newtheorem{defn}[thm]{Definition}
\newtheorem{rmk}[thm]{Remark}

\numberwithin{equation}{section}

\newtheorem*{thm*}{Theorem}
\newtheorem*{lem*}{Lemma}
\newtheorem*{prop*}{Proposition}
\newtheorem*{defn*}{Definition}
\newtheorem*{prob*}{Problem}
\newtheorem*{ques*}{Question}

\usepackage{algorithm}
\usepackage{algorithmic}



\newcommand{\be}{\begin{equation}}
\newcommand{\ee}{\end{equation}}
\newcommand{\bes}{\begin{equation*}}
\newcommand{\ees}{\end{equation*}}
\newcommand{\bea}{\begin{eqnarray}}
\newcommand{\eea}{\end{eqnarray}}
\newcommand{\beas}{\begin{eqnarray*}}
\newcommand{\eeas}{\end{eqnarray*}}
\newcommand{\bal}{\begin{aligned}}
\newcommand{\eal}{\end{aligned}}

\title{Low-ancilla block encodings via Hamiltonian simulation}

\author{Yuxin Zhang\thanks{zhangyuxin@amss.ac.cn} \, and Changpeng Shao\thanks{changpeng.shao@amss.ac.cn}}

\affil{SKLMS, Academy of Mathematics and Systems Science, Chinese Academy of Sciences, Beijing, 100190 China}

\date{\today}

\begin{document}

\maketitle

\begin{abstract}

Block encodings are a central primitive in quantum algorithms, but standard constructions typically require logarithmic ancilla overhead and complicated controlled operations. Recent lower bounds further show that such ancilla overhead is unavoidable for exact constructions in broad circuit models.
We show that this barrier can be bypassed in the approximate setting.
Specifically, we present a simple single-ancilla construction that converts Hamiltonian evolution into a block encoding of the underlying Hamiltonian, via generalized quantum signal processing.
For operators given by Hermitian decompositions $A=\sum_{j=1}^L \alpha_j H_j$, we instantiate this block-encoding construction in two ways, which differ in how the required Hamiltonian evolution is implemented.
Using higher-order Trotterization, we obtain an $\eps$-approximate block encoding of $A$ with only one ancilla qubit and circuit depth $\widetilde O\big(L(\alpha/\varepsilon)^{o(1)}\big),$ where $\alpha=\sum_j \alpha_j$.
Using multiproduct formulas, we obtain circuit depth $\widetilde O(L)$, at the cost of $O(\log\log(1/\eps))$ ancilla qubits.
Our constructions provide alternatives to the standard LCU framework, with a focus on reducing the number of ancilla qubits while maintaining (near-)optimal circuit depth.


\end{abstract}

\section{Introduction}
Block encoding is a fundamental tool in quantum algorithms~\cite{chakraborty2019power}, serving as a bridge between abstract linear algebraic operations and their efficient quantum implementation.
It enables non-unitary matrices to be manipulated through unitary dynamics, and hence underlies a wide range of quantum algorithmic techniques, including Hamiltonian simulation~\cite{low2019hamiltonian} and quantum singular value transformation (QSVT)~\cite{gilyen2019qsvt}.
However, constructing a block encoding of a given operator often incurs substantial overhead in ancilla qubits, circuit depth, and multi-qubit controlled operations. 
This overhead can become a bottleneck for both the practical and theoretical efficiency of algorithms that rely on block encodings.
Reducing ancilla overhead while preserving shallow circuit depth is therefore an important objective, especially in view of the limited resources of current quantum devices, 
and resource-efficient block encodings have been studied from several perspectives and have attracted considerable recent attention, e.g., see~\cite{vasconcelos2025methods,liu2025block,camps2024explicit}.

Suppose an $n$-qubit Hermitian operator $A$ is expressed as a linear combination of $L$ simple Hermitian operators, such as Pauli strings, $A=\sum_{j=1}^L \lambda_j P_j$, where $P_j \in \{I,X,Y,Z\}^{\otimes n}$.
Then a block encoding of $A$ can be constructed via the standard linear combination of unitaries (LCU) framework~\cite{berry2015simulating}, using $O(\log L)$ ancilla qubits and circuit depth linear in $L$.
Although this construction is highly useful and depth-efficient, the ancilla register plays an essential role in coherently selecting the terms in the decomposition.
As a result, it involves complicated multi-qubit controlled operations, which are costly to implement and particularly undesirable for near-term quantum devices. 
Thus, reducing the ancilla overhead is useful not only for reducing circuit width, but also for simplifying the controlled operations of the resulting quantum circuits.
Moreover, it was recently shown that, when $A$ is given as a linear combination of unitaries, this logarithmic ancilla overhead is unavoidable for \emph{exact} block-encoding constructions in a fairly general circuit model~\cite[Theorem A.1]{chakraborty2025quantum}.
This naturally motivates the following question:
\begin{quote}
    Is it possible to efficiently construct an \emph{approximate} block encoding of $A$ using only one ancilla qubit?
\end{quote}
If such construction is possible, then the quantum resources required by many algorithms based on block encodings could be improved accordingly, especially the number of ancilla qubits and complicated controlled operations. This is especially relevant to quantum singular value transformation: a coherent low-ancilla block encoding of $A$ can be directly integrated into the standard QSVT framework.

In this paper, we study approximate block encodings of operators given by Hermitian decompositions,
with a focus on reducing the number of ancilla qubits while maintaining (near-)optimal circuit depth.
We first show that Hamiltonian simulation can be converted into a block encoding using only one ancilla qubit via generalized quantum signal processing (GQSP) \cite{wang2023quantum,motlagh2024generalized}. More precisely, for a Hermitian operator $A$ with $\|A\|\leq 1$, given access to controlled-$e^{\pm iA}$, we construct an $\varepsilon$-approximate
block encoding of $A$ with one ancilla using $O(\log(1/\varepsilon))$ queries;
see \cref{thm:arcsin-constru}. 
A closely related QSVT-based conversion was previously given in \cite[Corollary 71]{gilyen2019qsvt}, which also recovers $A$ from access to $e^{iA}$, but assumes $\|A\|\leq 1/2$ and requires two ancilla qubits.
Together with the well-known construction of Hamiltonian simulation from block encodings, Low and Chuang established an equivalence between Hamiltonian simulation and block encodings up to constant ancilla and logarithmic circuit depth overhead~\cite[Section 6 and Theorem 9]{low2017hamiltonian}.
Our direct application of GQSP to the Hamiltonian evolution operator tightens this equivalence by simultaneously relaxing the norm restriction and reducing the ancilla cost.

We then instantiate this block-encoding construction for operators given as Hermitian decompositions
\begin{equation}
\label{eq:hermitian-decomp}
    A=\sum_{j=1}^L \alpha_j H_j,
    \qquad \alpha_j\geq 0,\quad \|H_j\|=1,
\end{equation}
which naturally arise in a wide range of physical systems. 
We obtain two approaches, which differ in how the required Hamiltonian evolution is implemented.
First, by implementing the required time evolution operator using higher-order product formulas, we obtain an $\varepsilon$-approximate
block encoding of $A$ with one ancilla, and  circuit depth
\be
     \widetilde{O}\left( L \left(\alpha/\eps\right)^{o(1)}\right),
\ee
where $\alpha=\sum_j\alpha_j$; see \cref{thm:be-with-trotter}. 
Second, by using multiproduct formulas (MPF) for Hamiltonian simulation,  we obtain an $\varepsilon$-approximate
block encoding of $A$ with $O(\log\log(\alpha/\varepsilon))$ ancillas and circuit depth
\be
    \widetilde{O}(L),
\ee
see \cref{thm:MPF}. Compared with the former construction, this improves the precision dependence in the circuit depth to polylogarithmic, at the cost of $O(\log\log(\alpha/\eps))$ ancilla overhead. 

These results give low-ancilla alternatives to the standard LCU for block encoding. The LCU construction is exact and has circuit depth linear in $L$, but requires $O(\log L)$ ancilla qubits. In contrast, our approximate constructions reduce the ancilla overhead, at the cost of additional circuit depth.
Moreover, for a given Hermitian decomposition as in~\cref{eq:hermitian-decomp},  LCU requires block encodings of the individual terms $H_j$, whereas our methods can be applied whenever the term evolutions can be implemented. 
This gives an additional advantage in settings where the term evolutions are easier to implement than block encodings of the individual terms.
The tradeoff is summarized in~\cref{table:block-encoding-comparison}.

\begin{table}[ht!]
\centering
\renewcommand{\arraystretch}{1.3}
\resizebox{\textwidth}{!}{
\begin{tabular}{|c|c|c|c|c|}
\hline
\textbf{Method} & \textbf{Circuit depth} & \textbf{Ancilla} & \textbf{Block-encoding factor} & \textbf{Error}\\
\hline
LCU
& $O(L)$
& $\log(L)$
& $\alpha$
& $0$ \\
\hline
Higher-order Trotterization (Thm.~\ref{thm:be-with-trotter})
& $\widetilde{O}\big( 5^{k-1} L \left(\alpha/\eps\right)^{\frac{1}{2k}}\big)$
& 1 
& $\alpha\pi/2$
& $\eps$ \\
\hline
Multiproduct formulas (Thm.~\ref{thm:MPF-general})
& $\widetilde{O}\left(4^a L \left({\alpha}/{\eps}\right)^{\frac{1}{2^{a}}} \right)$
& $a$
& $\alpha\pi/2$
& $\eps$ \\
\hline
Multiproduct formulas (Cor.~\ref{thm:MPF})
& $\widetilde{O}(L)$
& $O(\log\log(\alpha/\eps))$
& $\alpha\pi/2$
& $\eps$ \\
\hline
\end{tabular}}
\caption{Comparison of different methods for constructing block encodings.\protect\footnotemark}
\footnotetext{Note that when $a=1$, the MPF construction reduces to the method based on the second-order Trotterization.} 
\label{table:block-encoding-comparison}
\end{table}

Regarding the consequence for QSVT, our construction yields coherent block encodings of $A$, which can be used directly within the standard QSVT framework.
Since standard QSVT applies polynomial transformations to the block-encoded operator, this leads to coherent block encodings of polynomial transformations of $A$, while retaining the reduced ancilla overhead; see \cref{thm:low-ancilla-qsvt}.
This can be contrasted with the recent framework of \emph{QSVT without block encodings}~\cite{chakraborty2025quantum}.
First, our construction is coherent, whereas that framework realizes transformations of $A$ at the level of
expectation value estimation,
yielding an incoherent estimator. Second, the transformations take different forms. In that framework, functions of $A$ are represented through Laurent polynomial transformations of $e^{iA}$, whereas our construction works within the standard QSVT setting based on polynomial transformations of $A$.
Finally, when the final goal is only to estimate an expectation value, our first construction for block-encoding can also be combined with the Richardson extrapolation framework of~\cite{chakraborty2025quantum}, since it is constructed through Hamiltonian-evolution-based interleaved sequences. This leads to an incoherent estimator using only one ancilla qubit, and more importantly, with circuit depth polylogarithmic in the precision.

From a broader perspective, the problem considered in this paper can be viewed as a special case of multivariate quantum signal processing (MQSP).
Given operators $H_1,\ldots,H_n$ and a multivariate function $f(x_1,\ldots,x_n)$, the general goal of MQSP is to implement transformations corresponding to $f(H_1,\ldots,H_n)$ using few quantum resources. The case studied here corresponds to the linear function $f(x_1,\ldots,x_n)=\sum_i \alpha_i x_i$.
A recent work of Vasconcelos and Gily\'{e}n~\cite{vasconcelos2025methods} constructed approximate block encodings of products of matrices corresponding to the multiplicative function $f(x_1,\ldots,x_n)=x_1\cdots x_n$.
Interestingly, their work also exhibits an ancilla-accuracy tradeoff: an $\eps$-approximate block encoding can be constructed using $O(\log\log(1/\eps))$ ancilla qubits, which is similar to our construction based on the multiproduct formula.
All these results suggest possible progress towards the study of MQSP, which has attracted increasing attention in recent years, e.g., see~\cite{rossi2022multivariable,nemeth2023variants,rossi2025modular,laneve2025multivariate}.
While univariate QSP is by now well understood, the multivariate setting remains much less developed. We hope that this work may stimulate further investigation into the development of MQSP.


\section{From Hamiltonian simulation to block encoding}
\label{sec:BE-via-HS}

In this section, we show how to construct block encodings from Hamiltonian simulation via generalized quantum signal processing (GQSP). We first recall the definition of block encoding and the standard LCU construction, which serves as a benchmark for ancilla overhead. We then review the GQSP result for bounded Laurent polynomials. Finally, we construct a desired Laurent polynomial which, given access to controlled Hamiltonian evolution operators, yields a block encoding of the Hamiltonian itself using only $1$ ancilla qubit.

\subsection{Block encoding and LCU}
\label{subsec:be-lcu}

Block encodings provide a standard way to represent a possibly non-unitary operator as part of a larger unitary operator, enabling one to apply quantum algorithms, such as QSP and QSVT, to general linear operators. The notion was implicit in earlier Hamiltonian simulation algorithms~\cite{low2019hamiltonian} and was later formalized in~\cite{chakraborty2019power,gilyen2019qsvt}. We use the following standard definition.

\begin{defn}[Block encoding]
   Let $A$ be an $s$-qubit operator, $\alpha,\eps\in\mb R_{\geq 0}$ and $a\in \mb N$, then the $(s+a)$-qubit unitary $U_A$ is called an $(\alpha, a, \varepsilon)$-block encoding of $A$, if
   \begin{equation*}
       \left\|A-\alpha\cdot(\bra{0^a} \otimes I) U_A (\ket{0^a} \otimes I) \right\| \leq \varepsilon .
   \end{equation*}
   Note that we have $\|A\|\leq \alpha+\eps$ since $\|U_A\|=1$.
\end{defn}

For a unitary decomposition $A=\sum_{j=1}^L \alpha_j U_j$ with $\alpha_j>0$ and $\alpha=\sum_j \alpha_j$, the standard LCU~\cite{berry2015simulating} framework yields an $(\alpha, \lceil \log L \rceil, 0)$-block encoding of $A$ as follows.
Let $V$ be a unitary such that
\[
V\ket{0}=\frac{1}{\sqrt\alpha}\sum_{j=1}^L \sqrt{\alpha_j} \ket{j},
\]
then the unitary
\begin{equation*}
    (V^\dag \otimes I)\bigg(\sum_{j=1}^{L} \ket{j}\bra{j} \otimes U_j \bigg) (V\otimes I)
\end{equation*}
encodes exactly $A/\alpha$ in its top-left block, with circuit depth $O(L)$.

In~\cite{chakraborty2025quantum}, it was proved that this $O(\log L)$ ancilla overhead is unavoidable for exact block-encoding constructions,
for a fairly general circuit model, including LCU as a special case.
\begin{lem}[Theorem~A.1 of \cite{chakraborty2025quantum}]
    For the circuit model
    \[
    U = (V_1 \otimes I) c_0\text{-}U_1 (V_2 \otimes I) c_0\text{-}U_2 \cdots 
    (V_L \otimes I) c_0\text{-}U_L (V_{L+1} \otimes I),
    \]
where $V_1,\ldots,V_{L+1}$ act on $a$ ancilla qubits, and each $c_0\text{-}U_j$ is controlled on $\ket{0^a}$. If $U$ is an exact block encoding of $A$, then $a=\Omega(\log L)$.
\end{lem}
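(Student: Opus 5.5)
We need to prove that in this circuit model an exact block encoding forces $a=\Omega(\log L)$. The natural approach is a dimension count: expand the top-left block $(\bra{0^a}\otimes I)U(\ket{0^a}\otimes I)$ as a noncommutative polynomial in the $U_j$, and show that the resulting expression cannot equal $A/\alpha=\sum_j(\alpha_j/\alpha)U_j$ for generic unitaries unless the ancilla space is large. Since the statement must hold for \emph{every} choice of $U_1,\dots,U_L$ (with fixed $V_i$ independent of them), we treat the $U_j$ as formal noncommuting variables.

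Write each controlled gate as $c_0\text{-}U_j = P\otimes U_j + (I-P)\otimes I$, where $P=\ket{0^a}\bra{0^a}$. Expanding the product, the top-left block is
\[
\sum_{S\subseteq[L]} c_S \prod_{j\in S}^{\rightarrow} U_j ,
\]
an ordered product for each subset $S$. The scalar coefficient $c_S$ is a matrix element of a product of $V_i$'s interleaved with either $P$ (if $j\in S$) or $I-P$ (if $j\notin S$). Matching with $\sum_j(\alpha_j/\alpha)U_j$ for generic noncommuting $U_j$ (free monomials are linearly independent as functions on unitary tuples of large enough dimension) requires $c_{\{j\}}=\alpha_j/\alpha\ne 0$ for every $j$, and $c_S=0$ for $|S|\ne 1$.

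The key step is to translate these constraints into a rank or dimension bound. Define states in the ancilla space by
\[
\ket{\phi_j}=(I-P)V_{j}\cdots(I-P)V_{L+1}\ket{0^a}
\quad\text{(suitably, with \(P\) inserted at position \(j\))}.
\]
More cleanly, set $w_j := \bra{0^a}V_1(I-P)V_2\cdots(I-P)V_j\ket{0^a}$-type vectors to the left and analogous vectors to the right. The condition $c_{\{j\}}\neq0$ says that the bilinear pairing of the $j$-th left vector with the $j$-th right vector is nonzero. The condition $c_{\{i,j\}}=0$ for $i<j$ says that the left vector at $i$ paired, through the segment between $i$ and $j$, with the right vector at $j$ vanishes. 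Because $P$ is rank one, each $c_S$ factorizes as a product of scalars $\beta_{i,j}$ between consecutive chosen indices. One therefore gets an $(L+1)\times(L+1)$ matrix $M_{ij}=\beta_{ij}$, where $\beta_{ij}$ is the amplitude from "$\ket{0^a}$ after step $i$" to "$\ket{0^a}$ before step $j$" through $(I-P)$-projected evolutions.

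The constraints make the relevant structure triangular with nonzero "diagonal-like" entries, forcing a rank of order $L$. On the other hand, each $\beta_{ij}=\bra{0^a}\,\cdot\,\ket{0^a}$ passes through a $(2^a-1)$-dimensional space and so factorizes through it. This would give the bound $2^a\gtrsim L$.

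I expect the main obstacle to be exactly this rank argument. The paths with $|S|\ge2$ and with $S=\emptyset$ must vanish while the singletons do not, so one needs a careful choice of which entries form a full-rank submatrix. If a pure rank bound fails, I would fall back to arguing via the citation's original approach: either induction on $L$ using a counting argument on the ancilla Hilbert space dimension, or simply invoking Theorem~A.1 of \cite{chakraborty2025quantum}. The latter is legitimate, since the lemma is quoted from there.
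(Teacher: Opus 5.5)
The paper does not prove this lemma; it quotes it from Theorem~A.1 of \cite{chakraborty2025quantum}, so your fallback of citing that theorem matches what the paper does. Your setup is also sound. With $Q=I-\ket{0^a}\bra{0^a}$ and $\beta_{i,j}=\bra{0^a}V_{i+1}QV_{i+2}Q\cdots QV_j\ket{0^a}$ for $0\le i<j\le L+1$, one has $c_S=\beta_{0,s_1}\beta_{s_1,s_2}\cdots\beta_{s_k,L+1}$. Testing on scalar oracles $U_j=z_jI$ already forces $c_{\{j\}}\neq0$ and $c_S=0$ for $|S|\neq1$.

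\textbf{The rank step fails.} The constraints give $\beta_{0,j}\neq0\neq\beta_{j,L+1}$ for $1\le j\le L$. From $c_{\{i,j\}}=\beta_{0,i}\beta_{i,j}\beta_{j,L+1}=0$ you then get $\beta_{i,j}=0$ for all $1\le i<j\le L$, and $c_\emptyset=0$ gives $\beta_{0,L+1}=0$. So the only nonzero entries of $(\beta_{i,j})$ lie in row $0$ and column $L+1$. This matrix, and every cut submatrix $(\beta_{i,j})_{i<m<j}$ that factors through $\mathrm{range}(Q)$, has rank at most $2$. The nonzero quantities $c_{\{j\}}=\beta_{0,j}\beta_{j,L+1}$ are products of a row-$0$ entry and a column-$(L+1)$ entry, not the diagonal of a triangular matrix. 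The factorization therefore yields only $2^a-1\ge2$.

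\textbf{No better choice of submatrix can fix this.} Your argument never uses that the $V_i$ are unitary, and without unitarity the statement is false. Take a three-dimensional ancilla space with basis $\ket{0},\ket{1},\ket{2}$ (pad to two qubits by letting each $V_i$ fix a fourth basis state). Set $V_1=\ket{0}(\bra{0}+\bra{1})$, $V_m=\ket{1}\bra{0}+\ket{1}\bra{1}+\ket{0}\bra{2}+\ket{2}\bra{2}$ for $2\le m\le L$, and $V_{L+1}=(\ket{0}+\ket{2})\bra{0}$. By induction, $(\bra{0}\otimes I)(V_1\otimes I)c_0\text{-}U_1\cdots(V_m\otimes I)c_0\text{-}U_m=\bra{0}\otimes U_m+\bra{1}\otimes I+\bra{2}\otimes(U_1+\cdots+U_{m-1})$. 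Hence the top-left block equals $\sum_{j=1}^L U_j$ exactly for every $L$, while all the scalar constraints and factorizations you use are satisfied.

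\textbf{The missing ingredient is $V_i^\dagger V_i=I$.} Given $\beta_{i,j}=0$ for $1\le i<j\le L$, induction on $j-i$ shows that every $Q$ inside $\beta_{i,j}$ acts trivially. That is, $\bra{0^a}V_{i+1}V_{i+2}\cdots V_j\ket{0^a}=0$ for $1\le i<j\le L$. Now define the unit vectors $\ket{w_1}=\ket{0^a}$ and $\ket{w_k}=V_2V_3\cdots V_k\ket{0^a}$ for $2\le k\le L$. By unitarity, $\langle w_i|w_j\rangle=\bra{0^a}V_{i+1}\cdots V_j\ket{0^a}=0$ for $i<j$. These are $L$ orthonormal vectors in $\mathbb{C}^{2^a}$, hence $a\ge\log_2 L$.
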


Motivated by this ancilla lower bound for exact constructions, we consider the following approximate setting:
\begin{restatable}{prob}{additiveBEprob}
\label{prob:additive-be}
    Assume that $A=\sum_{j=1}^L \alpha_j H_j$ is a Hermitian decomposition, where $\alpha_j \in \mathbb{R}_{\geq 0}$ and $\|H_j\| = 1$. The goal is to efficiently construct an $\eps$-approximate block encoding of $A$ using only $1$ ancilla qubit.
\end{restatable}

If only such a Hermitian decomposition is given, then the standard LCU does not directly yield a block encoding; one would need to first construct block encodings of the terms $H_j$.
This is one motivation for the methods considered here, as they do not require this extra step.
Also, it suffices to consider the Hermitian case. Indeed, if $A$ is not Hermitian, we can consider
\(
\begin{bmatrix}
    0 & A \\ A^\dag & 0
\end{bmatrix}.
\)

\subsection{Generalized quantum signal processing}

We recall the GQSP result used in our construction, and refer the reader to~\cite{motlagh2024generalized} for further details.

Suppose we can implement the controlled Hamiltonian simulation $U=e^{i H}$ (and  $U^{\dagger}=e^{-i H}$) as a black box. That is, the operators 
\[
    c_0\text{-}U
    =\left[\begin{array}{cc}
        U & 0 \\
        0 & I
        \end{array}\right], \quad 
    c_1\text{-}U^\dag 
    =\left[\begin{array}{cc}
            I & 0 \\
            0 & U^{\dagger}
            \end{array}\right]
\]
using one query to $U$ or $U^{\dagger}$. Let 
\begin{align*}
    R(\theta, \phi, \lambda)=\left[\begin{array}{cc}
        e^{i  (\lambda+\phi)} \cos (\theta) & e^{i   \phi} \sin (\theta) \\
        e^{i   \lambda} \sin (\theta) & -\cos (\theta)
        \end{array}\right] \otimes I 
\end{align*}
be an arbitrary $U(2)$ rotation of the single ancilla qubit.  
Motlagh and Wiebe \cite{motlagh2024generalized} showed that for any degree-$n$ polynomial $P(z)$, satisfying $|P(z)|\le 1$ on $\mathbb{T}:= \{ z\in \mathbb{C} : |z|=1 \}$, there exists an interleaved sequence of $R(\theta_j, \phi_j, 0)$ and $c_0\text{-}U$, $c_1\text{-}U^\dag$, of length $2n+1$, that implements a block encoding of $P(U)$.  
Moreover, their framework also holds for Laurent polynomials bounded on $\mathbb{T}$. We state this result as follows:

\begin{lem}[Combining Corollary 5 and Theorem 6 of \cite{motlagh2024generalized}]
    \label{lem:generalized-QSP}
    For any Laurent polynomial $P(z) = \sum_{j=-n}^{n} a_{j} z^j$ such that $|P(z)| \le 1$ for all $z\in \mathbb{T}$, there exist $\Theta=(\theta_j)_j, \Phi=(\phi_j)_j \in \mathbb{R}^{2n+1}, \lambda \in \mathbb{R}$ such that 
    \begin{equation*}
    \label{eq:gqsp}
        \begin{bmatrix}
            P(U) & * \\
            * & * 
        \end{bmatrix} = \Big(\prod_{j=1}^n R(\theta_{n+j}, \phi_{n+j}, 0) c_1\text{-}U^\dag \Big)\Big(\prod_{j=1}^{n} R(\theta_j, \phi_j, 0) c_0\text{-}U \Big) R(\theta_0, \phi_0, \lambda).
    \end{equation*}
\end{lem}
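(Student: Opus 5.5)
The plan is to reduce the Laurent case to the ordinary polynomial case of GQSP and then absorb the negative powers through an operator identity. Define $Q(z)=z^nP(z)=\sum_{k=0}^{2n}a_{k-n}z^k$. This is an ordinary polynomial of degree at most $2n$. Since $|z^n|=1$ on $\mathbb{T}$, we still have $|Q(z)|\le 1$ there.

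\emph{Step 1: GQSP for $Q$.} I would invoke the polynomial form of GQSP (Corollary 5 of \cite{motlagh2024generalized}) for $Q$. Because $1-|Q(z)|^2\ge 0$ on $\mathbb{T}$, the Fejér–Riesz theorem gives a complementary polynomial $\tilde Q$ of degree at most $2n$ with $|Q|^2+|\tilde Q|^2=1$ on $\mathbb{T}$. Then $\begin{bmatrix} Q & \cdot\\ \tilde Q & \cdot\end{bmatrix}$ extends to an $SU(2)$-valued polynomial matrix. A layer-stripping argument then produces angles $\theta_j,\phi_j$ for $j=0,\dots,2n$ and a phase $\lambda$. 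Each stripping step applies $R(\theta,\phi,0)^{-1}$ and divides by $z$, lowering the degree by one. The output is
\[
\begin{bmatrix} Q(U) & *\\ * & *\end{bmatrix}=\prod_{j=1}^{2n}\big(R(\theta_j,\phi_j,0)\,c_0\text{-}U\big)\,R(\theta_0,\phi_0,\lambda).
\]
The crux of this step is checking that the leading coefficients cancel at each stage, which follows from the unitarity constraint $|Q|^2+|\tilde Q|^2=1$. I regard this as the main technical obstacle. I would take it from \cite{motlagh2024generalized} rather than reprove it.

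\emph{Step 2: trading $c_0\text{-}U$ for $c_1\text{-}U^\dag$.} A direct check gives
\[
(I\otimes U^\dag)\,c_0\text{-}U=\mathrm{diag}(I,U^\dag)=c_1\text{-}U^\dag .
\]
Moreover $I\otimes U^\dag$ commutes with every ancilla rotation $R(\theta,\phi,\lambda)$ and with every $c_0\text{-}U$ and $c_1\text{-}U^\dag$. So in the last $n$ factors of the product from Step 1, I replace $c_0\text{-}U$ by $c_1\text{-}U^\dag$. Each replacement inserts one factor $I\otimes U^\dag$, and all $n$ of these can be pulled to the front. The modified circuit therefore equals $(I\otimes U^{-n})$ times the original circuit. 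Its top-left block is $U^{-n}Q(U)=P(U)$.

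Relabeling the angles into the two products as displayed in the lemma then gives the claimed sequence. It has length $2n+1$: $n$ queries to $c_0\text{-}U$, $n$ queries to $c_1\text{-}U^\dag$, and $2n+1$ single-qubit rotations.
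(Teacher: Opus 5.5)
Your proposal is correct and follows the same route as the cited source (Motlagh--Wiebe, Corollary 5 and Theorem 6). You apply polynomial GQSP to $z^nP(z)$, then convert $n$ of the $c_0\text{-}U$ factors into $c_1\text{-}U^\dagger$ via $c_1\text{-}U^\dagger=(I\otimes U^\dagger)\,c_0\text{-}U$, collecting the commuting $I\otimes U^\dagger$ factors in front to obtain $U^{-n}$; the paper gives no proof of its own and only cites these two results.
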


\subsection{Block encoding using Hamiltonian simulation}
\label{subsec:BE-via-HS}
It is well known that, given a block encoding of a normalized Hamiltonian $H$ with circuit depth $d$, we can use QSP to construct an $\varepsilon$-approximate block encoding of $e^{iH}$ with $2$ additional ancilla qubits and circuit depth $\Theta(d\log(1/\varepsilon))$~\cite{low2017optimal}.

In this section, we show how to construct block encodings by combining generalized quantum signal processing (GQSP)~\cite{wang2023quantum,motlagh2024generalized} with Hamiltonian simulation.
A similar idea was previously used in the QSVT framework, applying the arcsine function to a block encoding of $\sin(H)$ obtained from Hamiltonian evolution; see~\cite[Corollary~71]{gilyen2019qsvt} and \cite[Theorem 9]{low2017hamiltonian}.
Here we instead use GQSP directly on the Hamiltonian evolution operator, yielding a simple single-ancilla construction.

Indeed, for any Hamiltonian $H$, GQSP implements a degree-$n$ bounded Laurent polynomial $P(e^{iH})$, using $O(n)$ queries to the controlled evolution operators $e^{iH}, e^{-iH}$ and only $1$ ancilla. Hence, if we can construct a bounded Laurent polynomial $P$ such that $P(e^{ix})\approx x$ for $x$ in an appropriate interval, then we can use GQSP to construct a block encoding of $H$.
We present a method to construct such a Laurent polynomial, based on \cite[Lemma 70]{gilyen2019qsvt} and the identity $x=\arcsin(\sin x)$ on $[-1,1]$.

\begin{lem}[Lemma 70 of \cite{gilyen2019qsvt}]
\label{lem:approximation of arcsin}
    Let $\delta, \varepsilon \in\left(0, \frac{1}{2}\right]$, then there is an odd polynomial $P \in \mathbb{R}[x]$ of degree $O\left(\frac{1}{\delta} \log \left(\frac{1}{\varepsilon}\right)\right)$ such that $\|P\|_{[-1,1]} \leq 1$ and
    \[
    \left\|P(x)-\frac{2}{\pi} \arcsin (x)\right\|_{[-1+\delta, 1-\delta]} \leq \varepsilon.
    \]
\end{lem}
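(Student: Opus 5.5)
We prove the approximation of $\frac{2}{\pi}\arcsin$ on $[-1+\delta,1-\delta]$ by expanding $\arcsin$ in its Taylor series, truncating, and then correcting the truncated series so that it stays bounded by $1$ on all of $[-1,1]$.

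The Taylor series is
\[
\frac{2}{\pi}\arcsin(x)=\frac{2}{\pi}\sum_{k\ge 0}\frac{\binom{2k}{k}}{4^k(2k+1)}x^{2k+1},
\]
which has nonnegative coefficients summing to $\frac{2}{\pi}\arcsin(1)=1$. Two consequences follow.

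\begin{itemize}
\item Every partial sum $S_m$ (odd, with real coefficients) satisfies $|S_m(x)|\le 1$ on $[-1,1]$.
\item On $|x|\le 1-\delta$, the tail after degree $2m+1$ is bounded by
\[
\sum_{k>m}c_k(1-\delta)^{2k+1}\le (1-\delta)^{2m}\sum_k c_k\le e^{-2m\delta}.
\]
Choosing $m=O(\frac{1}{\delta}\log\frac1\varepsilon)$ makes this at most $\varepsilon$.
\end{itemize}

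Since the coefficients are nonnegative, the partial sums increase monotonically towards the full sum, so no rescaling is needed to keep $\|S_m\|_{[-1,1]}\le 1$. Taking $P=S_m$ therefore gives an odd real polynomial of degree $2m+1=O(\frac1\delta\log\frac1\varepsilon)$, bounded by $1$ on $[-1,1]$, and $\varepsilon$-close to $\frac{2}{\pi}\arcsin$ on $[-1+\delta,1-\delta]$.

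I expected the main obstacle to be the boundedness near $\pm1$. There the Taylor convergence fails, and a generic approximation (e.g.\ Chebyshev) could overshoot. If one wanted an approach that does not rely on positivity, the alternative is this: approximate $\frac{2}{\pi}\arcsin$ to accuracy $\varepsilon/2$ on $[-1+\delta/2,1-\delta/2]$. Then multiply by $(1-\varepsilon/2)$ and by a polynomial approximation of a rectangle function equal to $\approx1$ on $[-1+\delta,1-\delta]$ and small outside. That rectangle approximation has degree $O(\frac1\delta\log\frac1\varepsilon)$, via the erf-based construction of \cite{gilyen2019qsvt}. Finally, odd-symmetrize the result.

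For $\arcsin$, however, the nonnegative-coefficient argument suffices and is the cleanest route.
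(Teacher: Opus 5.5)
Your proof is correct and is essentially the standard argument. This paper does not prove the lemma; it imports it from Gily\'en et al., whose proof likewise truncates the Taylor series of $\frac{2}{\pi}\arcsin$, using nonnegativity of the coefficients and $\frac{2}{\pi}\arcsin(1)=1$ both for $\|P\|_{[-1,1]}\le 1$ and for the geometric tail bound on $[-1+\delta,1-\delta]$. One correction to your aside: the series does not fail to converge at $\pm 1$. It converges absolutely and uniformly on $[-1,1]$, and together with Abel's theorem this is exactly what justifies $\sum_k c_k=1$; convergence is only slow near the endpoints, so the rectangle-function fallback (which as sketched would also need the approximant to be controlled on all of $[-1,1]$) is not needed.
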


\begin{thm}
\label{thm:arcsin-constru}
    Let $A$ be a Hermitian operator with $\|A\|\leq 1$. Suppose that controlled-$e^{\pm iA}$ can be implemented with circuit depth $d$. Then there is a quantum circuit of depth $O(d\log(1/\eps))$ that implements a $(\pi/2, 1, \eps)$-block encoding of $A$.
\end{thm}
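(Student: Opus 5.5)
The plan is to build a bounded Laurent polynomial $Q(z)$ with $Q(e^{ix})\approx \frac{2}{\pi}x$ uniformly for $x\in[-1,1]$. We then apply \cref{lem:generalized-QSP} with $U=e^{iA}$. The natural route is $x=\arcsin(\sin x)$ together with $\sin x=(z-z^{-1})/(2i)$ for $z=e^{ix}$.

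Take the odd real polynomial $P$ from \cref{lem:approximation of arcsin}. Choose $\delta=1-\sin 1\approx 0.158\in(0,\frac12]$ and accuracy $\eps' = \eps/\pi$. Since $|\sin x|\le \sin 1 = 1-\delta$ for $|x|\le 1$, we get $|P(\sin x)-\frac{2}{\pi}x|\le \eps'$ on $[-1,1]$. The degree of $P$ is $m=O(\log(1/\eps))$, because $\delta$ is an absolute constant. Define
\[
Q(z)=P\!\left(\frac{z-z^{-1}}{2i}\right),
\]
a Laurent polynomial of degree $m$ in $z$. On $\mathbb{T}$ we have $z=e^{i\theta}$ and $(z-z^{-1})/(2i)=\sin\theta\in[-1,1]$, so $|Q(z)|=|P(\sin\theta)|\le \|P\|_{[-1,1]}\le 1$. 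The hypothesis of \cref{lem:generalized-QSP} therefore holds. That lemma gives a single-ancilla circuit whose top-left block is $Q(e^{iA})$, using $2m+1$ single-qubit rotations and $2m$ queries to $c_0\text{-}U$ and $c_1\text{-}U^\dagger$. Each query has depth $d$, so the total depth is $O(d\log(1/\eps))$.

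For correctness, diagonalize $A=\sum_k x_k\ket{v_k}\bra{v_k}$ with $x_k\in[-1,1]$. Then
\[
Q(e^{iA})=\sum_k Q(e^{ix_k})\ket{v_k}\bra{v_k} = P(\sin A).
\]
Hence $\|Q(e^{iA})-\frac{2}{\pi}A\|=\max_k|P(\sin x_k)-\frac{2}{\pi}x_k|\le\eps'$. Multiplying by $\pi/2$ gives $\|A-\frac{\pi}{2}(\bra{0}\otimes I)W(\ket{0}\otimes I)\|\le \frac{\pi}{2}\eps'\le\eps$, where $W$ is the GQSP circuit. So $W$ is a $(\pi/2,1,\eps)$-block encoding.

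The main point requiring care is the choice of $\delta$. The arcsine approximation only holds away from $\pm1$. The assumption $\|A\|\le 1$, rather than $\|A\|\le\pi/2$, is exactly what keeps $\sin x$ a constant distance from $\pm1$ and makes the degree independent of anything but $\eps$. Everything else is routine. The boundedness on all of $\mathbb{T}$, not just on the spectrum, is automatic because $\sin\theta$ always lies in $[-1,1]$, where $P$ is bounded.
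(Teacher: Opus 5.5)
Your proposal is correct and follows the paper's proof essentially step for step: Lemma~\ref{lem:approximation of arcsin} with $\delta=1-\sin 1$, the Laurent polynomial $Q(z)=P\bigl((z-z^{-1})/(2i)\bigr)$ bounded on $\mathbb{T}$ because $\sin\theta\in[-1,1]$, and GQSP with $O(\log(1/\eps))$ queries to controlled-$e^{\pm iA}$. The only difference is cosmetic: you take accuracy $\eps/\pi$ where the paper takes $2\eps/\pi$, which merely leaves a factor-$2$ slack in the final error bound.
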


\begin{proof}
We have the following identity
\[
\label{eq:sinA}
\sin(A) = \frac{e^{iA} - e^{-iA}}{2i}.
\]
By~\cref{lem:approximation of arcsin} with $\delta=1-\sin(1)$, there exists an odd polynomial $P\in\mb R[x]$ of degree $n=O(\log(1/\tilde{\eps}))$ such that
\[
\left\|P(x)-\frac{2}{\pi}\arcsin(x)\right\|_{[-\sin(1), \sin(1)]}\leq \tilde\varepsilon
\qquad\text{and}\qquad
\|P\|_{[-1,1]}\leq 1.
\]
Define the Laurent polynomial
\[
Q(z):=P\left(\frac{z-z^{-1}}{2i}\right).
\]
Then $Q$ has degree at most $n$ and for any $z=e^{i\theta}\in \mb T$, we have $|Q(z)|=|P(\sin \theta)| \leq 1$.
Also, we have
\[
    Q(e^{i A})=P(\sin A)\approx_{\tilde\eps} \frac{2}{\pi} \arcsin (\sin A) =  \frac{2}{\pi} A
\]
for $\|A\|\leq 1$.
Applying GQSP, we obtain a $(1,1,\tilde\eps)$-block encoding of $\frac{2}{\pi}A$. Let $\tilde\eps=\frac{2}{\pi}\eps$, then we obtain a $(\pi/2,1,\eps)$-block encoding of $A$. The circuit uses $O(n)$ queries to controlled-$e^{\pm iA}$, and therefore has circuit depth 
\[
O(nd)=O(d\log(1/\eps)).
\]
\end{proof}


\section{Low-ancilla block encodings for Hermitian decompositions}

In this section, we apply the construction from \cref{subsec:BE-via-HS}, based on Hamiltonian simulation, for target operators given as Hermitian decompositions.
Consider \cref{prob:additive-be}, which we restate here.
\additiveBEprob*

\begin{rmk}
    The ultimate target would be an $\eps$-approximate block encoding using $O(1)$ ancilla qubits, with circuit depth linear in $L$ and polylogarithmic in $1/\eps$. Our results below give two complementary tradeoffs toward this target. 
\end{rmk}

Instead of assuming access to Hamiltonian evolution operators as in \cref{thm:arcsin-constru}, we implement them approximately from the given decomposition. 
We consider two approaches. The first uses higher-order Trotterization and yields a strictly single-ancilla block encoding, at the cost of $(1/\varepsilon)^{o(1)}$ dependence in the circuit depth. The second uses multiproduct formulas and improves this precision dependence to polylogarithmic, while increasing the ancilla overhead to $O(\log\log(1/\varepsilon))$.

\subsection{Single-ancilla block encoding using Trotterization}
Hamiltonian simulation is one of the primary applications of quantum computing~\cite{lloyd1996universal}. Trotterization, also known as product formulas, provides a constructive tool for implementing the Hamiltonian evolution operator $e^{-iAt}$ by decomposing $A$ into simple Hermitian terms $H_j$ and simulating each $e^{-iH_jt}$.
Over the years, higher-order constructions have been developed~\cite{suzuki1990fractal,suzuki1991fractal}, yielding efficient Hamiltonian simulation algorithms while retaining simple implementation \cite{berry2007efficient,childs2018toward,childs2021theory}.

\subsubsection{Product formulas}
\label{subsubsec:trotter}
We recall some basic results on product formulas.
For a Hamiltonian expressed as a sum of $L$ Hermitian terms $A = \sum_{\gamma=1}^{L} H_{\gamma}$, a \textit{staged product formula} is
\[
\mathcal{P}(t):= \prod_{\nu=1}^{\Upsilon} \prod_{\gamma=1}^{L} e^{ i t a_{(\nu, \gamma)} H_{\pi_\nu(\gamma)}},
\]
where $\pi_\nu$ are permutations and $a_{(\nu, \gamma)} \in \mb R$. A product formula $\mathcal{P}(t)$ is said to be of {\em order $p$} if $e^{-i A t} = \mathcal{P}(t)+O(t^{p+1})$, and {\em symmetric} if $\mathcal{P}(-t)=\mathcal{P}^{-1}(t)$. 

A well-known example is the $2k$-th order Trotter-Suzuki formula $S_{2k}(t)$, recursively defined as
\be
\label{eq:2nd-trotter}
S_2(t):= \prod_{\gamma=L}^1 e^{-i H_\gamma t / 2} \prod_{\gamma=1}^{L} e^{-i H_\gamma t / 2}
\ee
and
\bes
\label{eq:trotter-suzuki}
S_{2 k}(t):= [S_{2(k-1)}(u_k t)]^2 S_{2(k-1)}((1-4 u_k) t)[S_{2(k-1)}(u_k t)]^2,
\ees
where $u_k=1/(4-4^{1/(2k-1)})$.
It is known \cite{berry2007efficient,wiebe2010higher} that $S_{2k}(t)$ is symmetric with $\Upsilon=2\cdot 5^{k-1}$ and $|a_{(\nu, \gamma)}|\le 2k/3^k$.

\begin{lem}[Theorem 6 and Corollary 7 of \cite{childs2021theory}]
\label{lem:Trotter-method}
Let $A=\sum_{j=1}^L H_j$ be a Hermitian decomposition and $t\geq 0$. Let $S(t)$ be a $p$th-order $\Upsilon$-stage product formula and 
\be
\label{eq:lambda}
{\lambda}_{\rm comm} = \sum_{j_1,\ldots, j_{p+1}=1}^L \big\| [H_{j_{1}},[H_{j_{2}},[ \cdots ,[H_{j_p},H_{j_{p+1}}] \cdots]]] \big\|.
\ee
Then
\[
\|S(t) - e^{-i A t}\| = O({\lambda}_{\rm comm} t^{p+1}).
\]
Moreover, $\|S^r(t/r) -  e^{-iAt}\| \leq \varepsilon$ provided that
\[
r = O\left( \frac{{\lambda}_{\rm comm}^{1/p} t^{1+1/p} }{\varepsilon^{1/p}} \right) .
\]
\end{lem}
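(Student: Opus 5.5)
The plan is to follow the commutator-scaling argument of Childs et al.: write the product formula as the solution of a perturbed Schrödinger equation, show that the perturbation vanishes to order $p$, and bound its remainder by nested commutators. First I would settle the single-step bound $\|S(t)-e^{-iAt}\|=O(\lambda_{\rm comm}t^{p+1})$. The multi-step bound then follows by telescoping.

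\textbf{Step 1 (error representation).} Differentiate $S(\tau)$, written as a product of $\Upsilon L$ exponentials $e^{i\tau a_{(\nu,\gamma)}H_{\pi_\nu(\gamma)}}$. By the product rule, each factor contributes its generator conjugated by the exponentials to its left. This gives $S'(\tau)=\big(-iA+\mathcal{E}(\tau)\big)S(\tau)$, where
\[
\mathcal{E}(\tau)=\sum_{\nu,\gamma} i a_{(\nu,\gamma)}\, e^{\tau\,\mathrm{ad}_{X_1}}\cdots e^{\tau\,\mathrm{ad}_{X_m}}\big(H_{\pi_\nu(\gamma)}\big)+iA .
\]
Here $X_1,\dots,X_m$ denote the exponents $i a H$ of the preceding factors. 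Variation of parameters (Duhamel) then gives
\[
S(t)-e^{-iAt}=\int_0^t e^{-iA(t-\tau)}\,\mathcal{E}(\tau)\,S(\tau)\,d\tau .
\]
Since both outer operators are unitary, $\|S(t)-e^{-iAt}\|\le\int_0^t\|\mathcal{E}(\tau)\|\,d\tau$.

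\textbf{Step 2 (order conditions).} Expand $\mathcal{E}(\tau)$ in powers of $\tau$. Each coefficient is a linear combination of nested commutators $\mathrm{ad}_{X_{i_1}}\cdots\mathrm{ad}_{X_{i_q}}(H_\cdot)$. Because $S$ has order $p$, we have $S(\tau)e^{iA\tau}=I+O(\tau^{p+1})$. Consequently $\mathcal{E}(\tau)=O(\tau^p)$, so all coefficients of $\tau^0,\dots,\tau^{p-1}$ vanish identically as operators. I would derive this by comparing Taylor series: the generator of $S(\tau)e^{iA\tau}$ is $\mathcal{E}$ conjugated by $e^{iA\tau}$.

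\textbf{Step 3 (remainder bound).} Apply Taylor's theorem with integral remainder to $\mathcal{E}$ at order $p$. The remainder is an integral over simplices of terms of the form
\[
e^{s_1\mathrm{ad}_{X_1}}\cdots e^{s_m\mathrm{ad}_{X_m}}\big(\mathrm{ad}_{X_{i_1}}\cdots\mathrm{ad}_{X_{i_p}}H_j\big).
\]
Each such term is a unitary conjugation of a depth-$(p+1)$ nested commutator of the $H$'s, scaled by products of the coefficients $a_{(\nu,\gamma)}$. Conjugation preserves the norm, so $\|\mathcal{E}(\tau)\|\le C\,\lambda_{\rm comm}\tau^p$. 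The constant $C$ depends only on $p$, $\Upsilon$ and $\max|a_{(\nu,\gamma)}|$, and I treat it as a constant. Integrating over $[0,t]$ gives $O(\lambda_{\rm comm}t^{p+1})$.

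\textbf{Main obstacle.} I expect Step 3 to be the main difficulty. One has to ensure the remainder is expressed purely through nested commutators of depth $p+1$, rather than through products of norms. The key bookkeeping is to expand the composition of several $e^{\tau\,\mathrm{ad}}$ maps via the multivariate Taylor formula. Each index tuple $(j_1,\dots,j_{p+1})$ appears with a bounded combinatorial multiplicity, so the total is controlled by the sum defining $\lambda_{\rm comm}$ in \cref{eq:lambda}. Terms involving $A$ itself are handled by expanding $A=\sum_j H_j$ linearly inside the commutators.

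\textbf{Step 4 (multi-step).} Unitarity and a telescoping sum give
\[
\|S^r(t/r)-e^{-iAt}\|\le r\,\|S(t/r)-e^{-iAt/r}\| = O\big(\lambda_{\rm comm}t^{p+1}/r^p\big).
\]
Setting this at most $\varepsilon$ and solving for $r$ yields $r=O\big(\lambda_{\rm comm}^{1/p}t^{1+1/p}/\varepsilon^{1/p}\big)$, as claimed.
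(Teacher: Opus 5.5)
The paper does not prove this lemma; it quotes it from Childs et al.\ (Theorem 6 and Corollary 7). Your outline follows the same route as that source: a variation-of-parameters representation, order conditions forcing $\mathcal{E}(\tau)=O(\tau^p)$, a Taylor remainder bounded by nested commutators, and telescoping. Steps 1, 2 and 4 are fine. One small slip in Step 2: the operator whose generator is $e^{iA\tau}\mathcal{E}(\tau)e^{-iA\tau}$ is $e^{iA\tau}S(\tau)$, not $S(\tau)e^{iA\tau}$. Both are $I+O(\tau^{p+1})$, so nothing changes.

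The gap is in Step 3, at exactly the point you flag as the main obstacle: the tool you name does not produce the form you claim. Suppose you apply Taylor's theorem to $\mathcal{E}$ in the single variable $\tau$, or the multivariate Taylor formula to $e^{s_1\mathrm{ad}_{X_1}}\cdots e^{s_m\mathrm{ad}_{X_m}}(Y)$ along the diagonal. Then the $p$th derivative of each summand of $\mathcal{E}$ is
\[
\sum_{q_1+\cdots+q_{k-1}=p}\binom{p}{q_1,\ldots,q_{k-1}}\, e^{s\,\mathrm{ad}_{X_1}}\mathrm{ad}_{X_1}^{q_1}\, e^{s\,\mathrm{ad}_{X_2}}\mathrm{ad}_{X_2}^{q_2}\cdots e^{s\,\mathrm{ad}_{X_{k-1}}}\mathrm{ad}_{X_{k-1}}^{q_{k-1}}(X_k),
\]
with the conjugations interleaved between the commutators. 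Such a term is not a unitary conjugate of $\mathrm{ad}_{X_1}^{q_1}\cdots\mathrm{ad}_{X_{k-1}}^{q_{k-1}}(X_k)$. For example, with $X_1=iY$, $X_2=iZ$, $X_3=iX$ (Paulis) we have $[X_1,[X_2,X_3]]=0$, yet
\[
[X_1,e^{sX_2}[X_2,X_3]e^{-sX_2}]=-4i\sin(2s)Z\neq 0 .
\]
Expanding the inner conjugation generates commutators of every depth above $p+1$. Controlling those costs powers of $\|H_j\|$, which is exactly the non-commutator scaling the lemma avoids. As written, $\|\mathcal{E}(\tau)\|\le C\lambda_{\rm comm}\tau^p$ is asserted rather than derived.

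The missing idea is to expand the conjugations one at a time, from the innermost outward, instead of differentiating the whole composition.
\begin{itemize}
\item Write $e^{\tau\,\mathrm{ad}_{X_{k-1}}}(X_k)$ as its Taylor polynomial of degree $p-1$ plus the integral remainder.
\item Leave that remainder alone. It is acted on only by $e^{\tau\,\mathrm{ad}_{X_1}}\cdots e^{\tau\,\mathrm{ad}_{X_{k-2}}}$, which is a unitary conjugation and preserves the norm.
\item For each polynomial term of degree $j<p$, expand the next conjugation $e^{\tau\,\mathrm{ad}_{X_{k-2}}}$ with a remainder of order $p-j$, and continue outward.
\end{itemize}
The purely polynomial terms are exactly the Taylor coefficients of $\mathcal{E}$ of degree $<p$, which vanish by your Step 2. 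Every surviving term has the form
\[
\frac{\tau^{q_{l+1}+\cdots+q_{k-1}}}{q_{l+1}!\cdots q_{k-1}!}\,e^{\tau\,\mathrm{ad}_{X_1}}\cdots e^{\tau\,\mathrm{ad}_{X_{l-1}}}\int_0^\tau\frac{(\tau-s)^{q_l-1}}{(q_l-1)!}\,e^{s\,\mathrm{ad}_{X_l}}\,\mathrm{ad}_{X_l}^{q_l}\mathrm{ad}_{X_{l+1}}^{q_{l+1}}\cdots\mathrm{ad}_{X_{k-1}}^{q_{k-1}}(X_k)\,ds,
\]
with $q_l\ge 1$ and $q_l+\cdots+q_{k-1}=p$. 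Its norm is at most $\tau^p\,\|\mathrm{ad}_{X_l}^{q_l}\cdots\mathrm{ad}_{X_{k-1}}^{q_{k-1}}(X_k)\|/(q_l!\cdots q_{k-1}!)$. Only now is each term, up to unitary conjugation, a single depth-$(p+1)$ nested commutator of the $H_j$. This successive expansion is how the cited argument proceeds.

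Your multiplicity count then works: at most $\Upsilon^{p+1}$ position sequences per index tuple, each with weight at most $(\max|a|)^{p+1}$. This gives $\|\mathcal{E}(\tau)\|\le C\lambda_{\rm comm}\tau^p$ with $C$ independent of $L$, and Step 4 goes through as you wrote it.
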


\begin{rmk}
A trivial upper bound for ${\lambda}_{\rm comm}$ is
\bes 
\label{upper bound of alpha-comm}
{\lambda}_{\rm comm} \leq 2^p \Big(\sum_{j=1}^L \|H_j\|\Big)^{p+1}.
\ees
\end{rmk}

\subsubsection{Block encoding using Trotterization}
\label{subsec:be-with-trotter}
Combining the construction based on GQSP in \cref{thm:arcsin-constru} with higher-order Trotterization described in \cref{subsubsec:trotter}, we obtain one way to solve \cref{prob:additive-be}.

\begin{thm}
\label{thm:be-with-trotter}
    Assume that $A=\sum_{j=1}^L \alpha_j H_j$ is a Hermitian decomposition with $\alpha_j \in \mathbb{R}_{\geq 0}$ and $\|H_j\|=1$. Let $\alpha=\sum_j \alpha_j$ and $k \in \mathbb{N}$.
    Then there is a quantum circuit that implements a $(\alpha\pi/2, 1, \eps)$-block encoding of $A$, with the circuit depth
    \be
    \widetilde{O}\left( 5^{k-1} L \left(\alpha/\eps\right)^{\frac{1}{2k}}\right)
    \ee
    assuming Hamiltonian simulation of each $H_j$ costs $O(1)$.
\end{thm}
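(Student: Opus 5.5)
We apply \cref{thm:arcsin-constru} to the normalized operator $\tilde A := A/\alpha$, which satisfies $\|\tilde A\|\le 1$. In that circuit we replace each exact controlled $e^{\pm i\tilde A}$ by the Trotter approximation $S_{2k}^r(\mp 1/r)$ built from the decomposition $\tilde A=\sum_j (\alpha_j/\alpha)H_j$. We then check how these approximation errors propagate through the GQSP sequence of \cref{lem:generalized-QSP}.

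\textbf{Step 1: GQSP with exact evolutions.} Apply \cref{thm:arcsin-constru} with target precision $\eps/(2\alpha)$. This gives an interleaved sequence with $n=O(\log(\alpha/\eps))$ uses of $c_0$-$U$ and $n$ uses of $c_1$-$U^\dag$, where $U=e^{i\tilde A}$. It implements a $(\pi/2,1,\eps/(2\alpha))$-block encoding of $\tilde A$. Multiplying by $\alpha$, it is an $(\alpha\pi/2,1,\eps/2)$-block encoding of $A$.

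\textbf{Step 2: replace the evolutions by Trotter circuits.} Let $W=S_{2k}^r(-1/r)$, which approximates $e^{i\tilde A}$. Since $S_{2k}$ is symmetric, $W^\dag=S_{2k}^r(1/r)$ approximates $e^{-i\tilde A}$ with the same error. Controlled versions of $W$ and $W^\dag$ are obtained by controlling each exponential $e^{-iH_j t}$. We assume this is $O(1)$ cost under the hypothesis, for example by realizing $c$-$e^{-iH_jt}$ from the term evolutions.

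Replacing $U$ by $W$ in every slot changes the full unitary by at most $2n\|W-e^{i\tilde A}\|$, by the telescoping bound for products of unitaries. Controlled versions have the same operator-norm error. This also bounds the change in the top-left block.

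We need the total error in the block of $\tilde A$ to be at most $\eps/(2\alpha)\cdot(2/\pi)$, up to constants. So it suffices to have $\|W-e^{i\tilde A}\|\le \delta:=c\,\eps/(\alpha n)$ for a suitable constant $c$.

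\textbf{Step 3: choose $r$ and count depth.} Apply \cref{lem:Trotter-method} with $p=2k$, $t=1$, and the trivial bound $\lambda_{\rm comm}\le 2^{2k}\big(\sum_j \alpha_j/\alpha\big)^{2k+1}=4^k$. This gives
\[
r=O\big(4^{k/(2k)}\delta^{-1/(2k)}\big)=O\big((\alpha n/\eps)^{1/(2k)}\big).
\]
Each $S_{2k}$ has $2\cdot 5^{k-1}L$ exponentials. So one controlled $W$ has depth $O(5^{k-1}L\,r)$. The total depth is
\[
O(n\cdot 5^{k-1}L\,r)=O\big(5^{k-1}L\,(\alpha/\eps)^{1/(2k)}\,\log^{1+1/(2k)}(\alpha/\eps)\big),
\]
which is $\widetilde O\big(5^{k-1}L(\alpha/\eps)^{1/(2k)}\big)$. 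Only the single GQSP ancilla is used.

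\textbf{Main obstacle.} The key point is the error propagation in Step 2. Lemma~\ref{lem:generalized-QSP} guarantees the block $P(U)$ only for an exact unitary input. Perturbing the oracle gives a unitary whose top-left block is no longer exactly a polynomial in any operator, so we cannot reapply the polynomial approximation. Instead we argue purely at the level of the circuit: the sequence of rotations and controlled unitaries is fixed, and the operator-norm distance between the two circuits is bounded by the sum of the per-query errors. This avoids any need to bound $P$ on a perturbed spectrum.

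A secondary technicality is the constant hidden in \cref{lem:Trotter-method}, which may depend on $k$. We absorb it into the $5^{k-1}$ factor, treating $k$ as fixed, or into the $\widetilde O$.
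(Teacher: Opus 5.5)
Your proposal is correct and follows essentially the same route as the paper: normalize to $\tilde A=A/\alpha$, run the GQSP arcsine construction of \cref{thm:arcsin-constru}, replace each controlled evolution by a $2k$-th order Trotter circuit, bound the resulting perturbation linearly in the $2n$ queries by telescoping, and balance $\eta\sim\eps/(\alpha n)$ against $\tilde\eps\sim\eps/\alpha$ to obtain depth $O(nr\Upsilon L)$. Your sign convention is slightly more careful than the paper's: you take $W=S_{2k}^r(-1/r)\approx e^{i\tilde A}$ and get $W^\dagger=S_{2k}^r(1/r)$ by symmetry, whereas the paper writes $S^r(1/r)\approx e^{i\tilde A}$ even though $S_2(t)$ approximates $e^{-iAt}$.
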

\begin{proof}
    Let $\tilde{A}=A/\alpha$ and $U=e^{i\tilde{A}}$. We use the $2k$-th order Trotter-Suzuki formula $S(t)$ to approximate $U$. By \cref{lem:Trotter-method}, we have $\|S^r(1/r)-e^{i\tilde{A}}\|=O(\eta)$ when
    \[
    r=O\left( \big(\tilde\lambda_{\rm comm}/\eta\big)^{\frac{1}{2k}} \right),
    \] 
    where $\tilde\lambda_{\rm comm}$ is defined for $\tilde A$ as in \cref{eq:lambda}.
    Using the degree-$n$ Laurent polynomial $Q(z)$ constructed in \cref{thm:arcsin-constru}, we have
    \[
    \left\|Q(e^{i \tilde A}) - \frac{2}{\pi} \tilde A\right\| \leq \tilde\eps,
    \]
    where $n=O(\log(1/\tilde{\eps}))$.
    Then we use GQSP to implement a block encoding of $Q(e^{i \tilde A})$, requiring $n$ queries to controlled-$U$ and controlled-$U^\dagger$. Replacing $U$ with $S^r(1/r)$, we obtain an approximate block encoding of $2A/(\pi\alpha)$, with the error bounded as
    \begin{equation*}
        O\left(n\eta +\tilde\eps \right).
    \end{equation*}
    This yields a block encoding of $A$ with normalization factor $\alpha\pi/2$ and error
    \begin{equation*}
        O\left(\alpha n\eta +\alpha\tilde\eps \right).
    \end{equation*}
    In order to obtain an $(\alpha\pi/2,1,\eps)$-block encoding of $A$, we choose $\tilde\eps, \eta$ such that
    \[
    \alpha n\eta =\alpha\tilde\eps=\eps 
    \quad \Rightarrow \quad 
    \eta=\frac{\eps}{\alpha n}, \quad \tilde\eps=\frac{\eps}{\alpha}.
    \]
    Therefore, we have
    \begin{equation*}
        n=O\left(\log(\alpha/{\eps})\right),
        \qquad r=O\left( \big(\tilde\lambda_{\rm comm}\alpha\log(\alpha/\eps)/\eps\big)^{\frac{1}{2k}} \right)
    \end{equation*}
    and $\tilde\lambda_{\rm comm}=\lambda_{\rm comm}/\alpha^{p+1}$ has a trivial upper bound $\tilde\lambda_{\rm comm} \leq 2^{2k}$.
    Hence, the circuit depth is
    \begin{equation*}
    \begin{aligned}
        O\left(n r \Upsilon L\right)&=O\left(5^{k-1} L \log\left(\frac{\alpha}{\eps}\right) \left(\frac{\alpha}{\eps}\log\left(\frac{\alpha}{\eps}\right)\right)^{\frac{1}{2k}} \right)\\
        &=\widetilde O\left(5^{k-1} L \left(\alpha/\eps\right)^{\frac{1}{2k}} \right).
    \end{aligned}
    \end{equation*}
\end{proof}

\subsection{Block encoding using multiproduct formula with improved precision dependence}


We next present a second implementation based on multiproduct formulas~\cite{low2019well,aftab2024multi}. Compared with higher-order product formulas, this achieves a much better dependence on the target precision and avoids the $5^{k-1}$ constant factor. In our setting, this yields a block-encoding construction whose circuit depth depends only polylogarithmically on $1/\varepsilon$, at the cost of using $O(\log\log(1/\varepsilon))$ ancilla qubits.\footnote{In particular, when $\varepsilon > 1/2^L$, this is better than the $O(\log L)$ overhead of the standard LCU framework.}

\subsubsection{Multiproduct formulas}

Let $A=\sum_{\gamma=1}^{L} H_\gamma$.
A $2$nd-based multiproduct formula (MPF) is defined as
\be 
\label{2nd multiproduct formula}
U(\bm{a}, \bm{k}, t) := \sum_{j=1}^m a_j S_2(t/k_j)^{k_j},
\ee
where $S_2(t)$ is the second-order Trotter-Suzuki formula, defined as in \cref{eq:2nd-trotter}.
In~\cite{low2019well,aftab2024multi}, it was proved that there exist $\bm{a}, \bm{k}\in \mb R^m$ with
\[
\|\bm{a}\|_1 = O(\log m),
\quad 
\|\bm{k}\|_1 = O(m^2 \log m)
\]
such that
\[
U(\bm{a}, \bm{k}, t)=e^{-i A t}+O\left(t^{2 m+1}\right),
\]
achieving $2m$-th order of convergence for any $m\in\mb N$.
They also proved that~\cite[(107), (111)]{aftab2024multi} choosing
\[
m = \left\lceil \frac{1}{2} \log \left(\frac{\mu T}{\varepsilon}\right) \right\rceil,
\quad
r = O\left(\mu T \Big(\frac{\mu T}{\varepsilon}\Big)^{1/(2m)}\right)
= O\left(\mu T \right)
\]
yields
\[
\left\|U\Big(\bm{a}, \bm{k}, \frac{T}{r}\Big)^r - e^{-iH T}\right\| \leq \varepsilon,
\]
where $\mu$ is the commutator scaling defined as below. We summarize their main results below.

\begin{lem}[Theorem 9 and Corollary 10 of \cite{aftab2024multi}]
\label{lem:mpf}
    Consider the Hamiltonian simulation problem up to time $T$ using 2nd-based MPF of convergence order $2 m$. Then, to obtain an $\eps$-approximation with probability at least $1-\eps$, it suffices to use
    \[
    O\left(m^2(\log m)^2 \mu_m T\left(\frac{\mu_m T}{\eps}\right)^{1 /(2 m)}\right)
    \]
    queries to the controlled version of $S_2$, and $\lceil\log m\rceil$ ancilla qubits. Here
    \begin{align*}
    &\mu_m=\sup _{\substack{j \in 2 \mathbb{Z}^{+}, j \geq 2 m\\ 1 \leq l \leq m}} \Bigg(\sum_{\substack{j_1, \cdots, j_l \in 2 \mathbb{Z}^{+}\\ j_1+\cdots+j_l=j}}\left(\prod_{\kappa=1}^l \lambda_{\mathrm{comm }, j_\kappa+1}\right)\Bigg)^{\frac{1}{j+l}} , \\
    &\lambda_{\mathrm{comm }, j}=\sum_{\gamma_1,\ldots, \gamma_{j}=1}^L \big\| [H_{\gamma_{1}},[H_{\gamma_{2}},[ \cdots ,[H_{\gamma_{j-1}},H_{\gamma_{j}}] \cdots]]] \big\|.
    \end{align*}
    Moreover, suppose that there exists an integer $M$ such that $\sup _{m \geq M} \mu_m \leq \mu$.\footnote{Note that $\mu_m$ has a trivial upper bound $\mu_m \leq 4 \sum_{\gamma=1}^{L}\left\|H_\gamma\right\|$. In Subsection~\ref{subsec:be-with-mpf}, we will only use this trivial estimate for simplicity.} Then, in order to obtain an $\eps$-approximation with probability at least $1-\eps$, it suffices to use
    \[
    O\left(\mu T \left(\log \left(\frac{\mu T}{\eps}\right)\right)^2 \left(\log\log \left(\frac{\mu T}{\eps}\right)\right)^2 \right)
    \]
    queries to the controlled version of $S_2$, and at most $\lceil\log\log(\mu T/\eps)\rceil$ ancilla qubits.
\end{lem}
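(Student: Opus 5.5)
The plan has three parts: bound the error of a single multiproduct step in terms of nested commutators, turn the non-unitary formula into a circuit via LCU plus amplitude amplification, and then choose parameters.

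\textbf{Single-step error.} I will start with an error representation for the symmetric second-order formula. Because $S_2$ is symmetric, a BCH / variation-of-parameters argument gives $S_2(s)=\exp\big(-iAs+\sum_{j\in 2\mathbb{Z}^+} E_{j} s^{j+1}\big)$. Here each $E_j$ is a linear combination of $(j+1)$-fold nested commutators of the $H_\gamma$, so $\|E_j\|\lesssim \lambda_{\mathrm{comm},j+1}$. Raising to the $k$-th power at step size $t/k$ gives
\[
S_2(t/k)^k=\exp\Big(-iAt+\sum_{j} E_j t^{j+1}k^{-j}\Big).
\]
Expanding the exponential around $e^{-iAt}$ (e.g.\ by a Dyson series in the interaction picture) produces terms indexed by tuples $(j_1,\dots,j_l)$ with $j_\kappa\in 2\mathbb{Z}^+$. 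Each such term scales as $k^{-(j_1+\cdots+j_l)}\,t^{j_1+\cdots+j_l+l}\prod_\kappa \lambda_{\mathrm{comm},j_\kappa+1}$.

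\textbf{Cancellation and the definition of $\mu_m$.} The MPF coefficients satisfy $\sum_i a_i=1$ and $\sum_i a_i k_i^{-2q}=0$ for $1\le q\le m-1$. Hence in $U(\bm a,\bm k,t)$ every term with total $j=j_1+\cdots+j_l<2m$ cancels. The surviving terms have $j\ge 2m$, and grouping them by $j$ and $l$ is exactly how $\mu_m$ arises. Bounding the sum by a geometric series gives
\[
\|U(\bm a,\bm k,t)-e^{-iAt}\|=O\big(\|\bm a\|_1(\mu_m t)^{2m+1}\big).
\]
This uses the stated bounds $\|\bm a\|_1=O(\log m)$ and the fact that the $k_i$ are at least of order $m$, so the factors $k_i^{-j}$ are harmless.

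\textbf{Global error and implementation.} Splitting $[0,T]$ into $r$ segments, the total error is $r\cdot\|\bm a\|_1(\mu_m T/r)^{2m+1}$. Setting this to $\eps$ gives $r=O\big(\mu_m T(\mu_m T/\eps)^{1/(2m)}\big)$. Each segment is implemented as an LCU of the unitaries $S_2(\cdot)^{k_i}$, using $\lceil\log m\rceil$ ancillas to encode the weights $a_i/\|\bm a\|_1$ and selecting $S_2^{k_i}$ through controlled-$S_2$. One segment then costs $O(\|\bm k\|_1)=O(m^2\log m)$ queries, and its success amplitude is about $1/\|\bm a\|_1$. Since the encoded operator is $\eps$-close to unitary, I will apply robust oblivious amplitude amplification with $O(\|\bm a\|_1)=O(\log m)$ rounds per segment. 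The ancillas can be reset and reused across segments, and the accumulated failure probability stays below $\eps$. The query count is
\[
r\cdot m^2\log m\cdot\log m
= O\big(m^2(\log m)^2\mu_m T(\mu_m T/\eps)^{1/(2m)}\big).
\]

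\textbf{Corollary part and the main obstacle.} For the second statement, take $m=\lceil\tfrac12\log(\mu T/\eps)\rceil$. Then $(\mu T/\eps)^{1/(2m)}=O(1)$, $m^2(\log m)^2$ becomes $\log^2\log\log^2$ of $\mu T/\eps$, and the ancilla count is $\lceil\log m\rceil\le\lceil\log\log(\mu T/\eps)\rceil$. I expect the main obstacle to be the remainder analysis in the first two steps. It requires controlling the full exponential expansion, including cross terms between different $E_j$, uniformly in $l$ with only $\|\bm a\|_1=O(\log m)$ growth, and showing that the combinatorial sum is dominated by $\mu_m^{j+l}$. I would first try the interaction-picture Dyson expansion, where cancellation is exact order by order in $t$. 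The second delicate point is checking that robust oblivious amplitude amplification tolerates the $O(\eps)$ non-unitarity of each segment.
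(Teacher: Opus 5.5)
The paper does not prove this lemma; it quotes it from the work of Aftab, An and Trivisa, so the relevant comparison is with that argument. Your outline has the right architecture. It uses an exponential form of $S_2$ with a nested-commutator generator, an expansion in that generator, and cancellation from $\sum_i a_i k_i^{-2q}=0$. It then uses an LCU over the $S_2(\cdot)^{k_i}$ costing $\|\bm k\|_1=O(m^2\log m)$ queries, with $O(\|\bm a\|_1)=O(\log m)$ rounds of robust oblivious amplitude amplification per segment (that part is standard), and finally sets $m\approx\frac12\log(\mu T/\eps)$. The gap is in the single-step bound, which is where all the commutator scaling lives.

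\textbf{First issue: convergence of the exponent.} The identity $S_2(s)=\exp\big(-iAs+\sum_j E_j s^{j+1}\big)$ is only a formal BCH identity. Variation of parameters gives a time-ordered exponential instead, and then $S_2(t/k)^k$ no longer depends on $k$ purely through $k^{-j}$, which your cancellation step needs. So you must prove convergence at the step sizes actually used. The textbook criterion $s\sum_\gamma\|H_\gamma\|<\log 2$ would tie the step size to $1/\sum_\gamma\|H_\gamma\|$ rather than to $1/\mu_m$, losing exactly the commutator scaling the lemma asserts. What is missing is a commutator-scaled radius of convergence. It can be supplied as follows.
\begin{itemize}
\item By the Dynkin--Specht--Wever theorem, the coefficient of $s^n$ in $\log S_2(s)$ equals $\frac1n\sum_w c_w\,\mathrm{ad}_w$. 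Here $\mathrm{ad}_w$ is the nested commutator of the letters of the word $w$, and each coefficient satisfies $|c_w|\le(2^n-1)/n$ independently of $L$. Hence $\|E_j\|\le 2^{j+1}\lambda_{\mathrm{comm},j+1}$.
\item Inserting the constant tuple $(j,\dots,j)$ of length $\lceil 2m/j\rceil\le m$ into the definition of $\mu_m$ gives $\lambda_{\mathrm{comm},j+1}\le\mu_m^{j+1}$ for every even $j\ge 2$.
\item So the exponent converges for $s<1/(2\mu_m)$, and the identity extends from a small disk to this one by analyticity in $s$. This is compatible with $r=\Omega(\mu_m T)$.
\end{itemize}
The factor $2^{j}$ only rescales $\mu_m$ by a constant, so your $\lesssim$ must be read as allowing $C^j$.

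\textbf{Second issue: the number of factors $l$.} The full Dyson series in $Y=\sum_j E_j t^{j+1}k^{-j}$ contains terms with arbitrarily many factors $l$, whereas $\mu_m$ only controls $l\le m$. Grouping by $(j,l)$ therefore does not by itself produce $\mu_m$. This is the ``uniformly in $l$'' issue you flagged, and it needs an argument. The clean fix is to truncate. With $X=-iAt$, write $e^{X+Y}=\sum_{l=0}^{m-1}D_l(Y)+R_m$, where $R_m$ contains exactly $m$ factors of $Y$ followed by the unitary $e^{\tau(X+Y)}$.
\begin{itemize}
\item Each $D_l$ is $l$-linear in $Y$ with $k$-independent operator coefficients. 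So its terms with $j<2m$ cancel, and the survivors have $l<m$ and $j\ge 2m$.
\item The remainder is already $O(k^{-2m})$, so it needs no cancellation. The bound $\|R_m\|\le\|Y\|^m/m!$ expands into tuples with $l=m$ and $j\ge 2m$, which is exactly the range built into $\mu_m$.
\end{itemize}
Alternatively, for $l>m$ you can average the $\mu_m$ bound over all $m$-element sub-tuples to get $\prod_\kappa\lambda_{\mathrm{comm},j_\kappa+1}\le\mu_m^{j+l}$. The $1/l!$ then makes that tail $O\big((2\mu_m t)^{3m+3}\big)$.

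With both repairs you obtain $\|U(\bm a,\bm k,t)-e^{-iAt}\|=O\big(\|\bm a\|_1(2\mu_m t)^{2m+1}\big)$ for $t\le 1/(4\mu_m)$, and the rest of your argument goes through.
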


\begin{cor}
\label{cor:multiproduct HS}
    Let $A=\sum_{j=1}^L \alpha_j H_j$ be a Hermitian decomposition with $\alpha_j \in \mathbb{R}_{\geq 0}$ and $\|H_j\|=1$. Let $\mu$ be the commutator-scaling parameter as defined in~\cref{lem:mpf}. Then there is a quantum algorithm that implements an $\eps$-approximation of $e^{-iA}$, with circuit depth $O\big(m^2(\log m)^2 \mu L (\mu/{\eps})^{1 /(2 m)}\big)$ and $\lceil\log m\rceil$ ancilla qubits.  
\end{cor}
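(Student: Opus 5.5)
The plan is to read \cref{cor:multiproduct HS} as a direct specialization of \cref{lem:mpf}, taking $T=1$. We write $A=\sum_{j}\alpha_j H_j$ and view the terms of the decomposition as $\alpha_j H_j$. Then $S_2(t)$ from \cref{eq:2nd-trotter} is built from the $2L$ term evolutions $e^{-i\alpha_j H_j t/2}$. Since each $\alpha_jH_j$ is a rescaling of $H_j$, each such factor is a single Hamiltonian-simulation step of $H_j$ for a rescaled time, so it costs $O(1)$ under the convention of \cref{thm:be-with-trotter}. Hence one controlled-$S_2$ has circuit depth $O(L)$: control can be added factor by factor, and each controlled term evolution still costs $O(1)$.

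Next, we apply the first part of \cref{lem:mpf} with convergence order $2m$. The commutator quantities $\lambda_{{\rm comm},j}$ and $\mu_m$ are taken with respect to the terms $\alpha_jH_j$, and $\mu$ is any bound with $\mu_m\le\mu$; for instance the trivial bound $\mu_m\le 4\sum_j\|\alpha_jH_j\|=4\alpha$ from the footnote. The lemma gives an $\eps$-approximation of $e^{-iA}$ using
\[
O\big(m^2(\log m)^2\mu(\mu/\eps)^{1/(2m)}\big)
\]
queries to controlled-$S_2$, together with $\lceil\log m\rceil$ ancillas. These ancillas are the LCU register that prepares the coefficients $a_j$ and selects $S_2(t/k_j)^{k_j}$. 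Multiplying this query count by the $O(L)$ depth per query gives the claimed depth $O\big(m^2(\log m)^2\mu L(\mu/\eps)^{1/(2m)}\big)$.

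The main obstacle is interpreting what ``implements an $\eps$-approximation'' means, because \cref{lem:mpf} is stated with success probability $1-\eps$. The MPF is a linear combination of unitaries, not a unitary, so the construction in \cite{aftab2024multi} produces a block encoding of $U(\bm a,\bm k,1/r)^r$ and relies on (oblivious) amplitude amplification to obtain it with high probability. I would therefore phrase the corollary as giving an $(1,\lceil\log m\rceil,\eps)$-block encoding of $e^{-iA}$, or as a robust oblivious-amplitude-amplified near-unitary. I would check that the $\|\bm a\|_1=O(\log m)$ normalization is already absorbed into the stated query bound, which is where the $(\log m)^2$ factor comes from. Finally, I would note that the selection over $j\le m$ with the $k_j$ powers requires controlled-$S_2$ powers. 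Implementing these sequentially gives depth governed by $\|\bm k\|_1=O(m^2\log m)$ per segment, which is consistent with the $m^2$ factor in the stated bound.
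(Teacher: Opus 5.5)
Your proposal is correct and matches the paper, which states this corollary without a separate proof as an immediate consequence of \cref{lem:mpf} with $T=1$: bound $\mu_m\le\mu$, then multiply the controlled-$S_2$ query count by the $O(L)$ depth of each controlled $S_2$. Your caveat is a fair point that the paper's statement glosses over: \cref{lem:mpf} only guarantees an $\eps$-approximation with success probability $1-\eps$, or coherently a block encoding on the $\lceil\log m\rceil$ ancillas, rather than a genuine unitary.
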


\subsubsection{Block encoding using multiproduct formula}
\label{subsec:be-with-mpf}

\begin{thm}
\label{thm:MPF-general}
    Let $A=\sum_{j=1}^L \alpha_j H_j$ be a Hermitian decomposition with $\alpha_j \in \mathbb{R}_{\geq 0}$ and $\|H_j\|=1$. Let $\alpha=\sum_j \alpha_j$. For any integer $a\geq 1$, there is a quantum circuit that implements an $(\alpha\pi/2, a, \eps)$-block-encoding of $A$. The cost and the circuit depth are
    \be
    \widetilde{O}\left(4^a L \left(\frac{\alpha}{\eps}\right)^{\frac{1}{2^{a}}} \right)
    \ee
    assuming Hamiltonian simulation of each $H_j$ costs $O(1)$.
\end{thm}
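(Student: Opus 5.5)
The plan follows the proof of \cref{thm:be-with-trotter}, with the Trotter approximation of $e^{i\tilde A}$ replaced by a multiproduct formula. Set $\tilde A=A/\alpha$, so that $\|\tilde A\|\le 1$, and take the MPF order parameter $m=2^a$. Then $\lceil\log m\rceil=a$ ancilla qubits are needed for the LCU over the $m$ Trotter terms. The GQSP qubit from \cref{thm:arcsin-constru} must either be counted separately or merged with this register; this bookkeeping is addressed below.

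\textbf{Step 1: the approximate evolution.} Use \cref{cor:multiproduct HS}, applied to $\tilde A$ (with time $\pm 1$ to get both $e^{\pm i\tilde A}$), to produce a unitary $W$ that is $\eta$-close to $e^{i\tilde A}$. By the trivial bound in the footnote of \cref{lem:mpf}, $\mu\le 4\sum_j\|\alpha_jH_j/\alpha\|=4=O(1)$. The depth of $W$ is therefore
\[
O\big(m^2(\log m)^2 L\,\eta^{-1/(2m)}\big)=O\big(4^a a^2 L\,\eta^{-1/2^{a+1}}\big).
\]
The MPF is a non-unitary LCU. A single implementation only succeeds with probability $\approx 1/\|\bm a\|_1^2$, so one has to rely on the version in \cref{lem:mpf} that uses robust oblivious amplitude amplification and yields an $\eta$-approximate unitary. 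The amplification rounds are absorbed into the $\widetilde O$, and the success probability $1-\eta$ is folded into the operator-norm error. Controlled versions of $W$ cost the same depth up to constants, because only the $S_2$ factors need control.

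\textbf{Step 2: the GQSP layer and error budget.} Take the degree-$n$ Laurent polynomial $Q$ from \cref{thm:arcsin-constru} with accuracy $\tilde\eps=\eps/\alpha$, so that $n=O(\log(\alpha/\eps))$. Run GQSP with $W$ in place of $e^{i\tilde A}$. Each of the $2n$ queries contributes at most $\eta$ error by the telescoping bound, giving total error $O(n\eta+\tilde\eps)$ in the block encoding of $\tfrac{2}{\pi}\tilde A$. After rescaling, this is $O(\alpha n\eta+\eps)$ for $A$ with normalization $\alpha\pi/2$. Choose $\eta=\eps/(\alpha n)$. The total depth is then
\[
O(n)\cdot O\big(4^a a^2 L(\alpha n/\eps)^{1/2^{a+1}}\big)=\widetilde O\big(4^a L(\alpha/\eps)^{1/2^{a}}\big),
\]
which gives the claimed bound with some room to spare in the exponent.

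\textbf{Main obstacle: the ancilla count.} The statement claims exactly $a$ ancillas, while the naive count is $a$ MPF qubits plus one GQSP qubit. I would use the slack in the exponent: take $m=2^{a-1}$, so $\lceil\log m\rceil=a-1$ MPF qubits plus the GQSP qubit gives $a$ ancillas in total. The MPF exponent becomes $1/(2m)=1/2^{a}$, matching the statement exactly, and $m^2=4^{a-1}$ matches the $4^a$ prefactor. The case $a=1$ (so $m=1$) is plain second-order Trotter with no MPF ancilla, consistent with the table's footnote.

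The remaining subtlety is that the MPF ancillas are returned to $\ket{0}$ only approximately after each query. Since $W$ is $\eta$-close to a unitary acting trivially on the MPF register, these qubits can be reused across all $2n$ GQSP queries. The resulting leakage is already accounted for in the $n\eta$ error term.
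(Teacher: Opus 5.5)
Your proposal is correct and is essentially the paper's proof. It uses the same error split $\eta=\eps/(\alpha n)$, $\tilde\eps=\eps/\alpha$ with the trivial $O(1)$ commutator bound for $\tilde A$, and the same ancilla bookkeeping $a=\lceil\log m\rceil+1$ (i.e.\ $m=2^{a-1}$), which the paper adopts directly rather than after your $m=2^a$ detour. Your remarks on amplitude amplification, controlled queries, and reuse of the MPF register fill in details the paper leaves implicit. Even if only closeness of the top-left block were available, reusing the register would cost $O(n^2\eta)$ instead of $O(n\eta)$, which is still absorbed in the $\widetilde O$.
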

\begin{proof}
    Let $\tilde{A}=A/\alpha$.
    Using the degree-$n$ Laurent polynomial $Q(z)$ constructed in \cref{thm:arcsin-constru}, we have
    \[
    \left\|Q(e^{i \tilde A}) - \frac{2}{\pi} \tilde A\right\| \leq \tilde\eps,
    \]
    where $n=O(\log(1/\tilde{\eps}))$.
    
    Corollary~\ref{cor:multiproduct HS} gives a unitary $U\approx_{\eta} e^{i\tilde{A}}$ with circuit depth of $O\big(m^2(\log m)^2 L (1/{\eta})^{1 /(2 m)}\big)$ and $O(\log m)$ ancilla qubits.
    Here we used the trivial commutator-scaling bound for $\tilde A=A/\alpha$, which gives a constant commutator-scaling parameter.
    Then we use GQSP to implement a block encoding of $Q(e^{i \tilde A})$, requiring $n$ queries to controlled-$e^{i \tilde A}$ and controlled-$e^{-i \tilde A}$.
    Replacing $e^{i \tilde A}$ with $U$, we obtain an approximate block encoding of $2A/(\pi\alpha)$, with the error bounded as
    \[
        O\left(n\eta +\tilde\eps \right).
    \]
    This yields a block encoding of $A$ with normalization factor $\alpha\pi/2$ and error
    \[
        O\left(\alpha n\eta +\alpha\tilde\eps \right).
    \]
    In order to obtain an $\eps$-block encoding of $A$, we choose $\tilde\eps, \eta$ such that
    \[
    \alpha n\eta =\alpha\tilde\eps=\eps 
    \quad \Rightarrow \quad 
    \eta=\frac{\eps}{\alpha n}, \quad \tilde\eps=\frac{\eps}{\alpha}.
    \]
    Therefore, we have
    \[
        n=O\left(\log(\alpha/{\eps})\right),
        \qquad \eta=O\left(\frac{\eps}{\alpha \log(\alpha/\eps)} \right).
    \]
    Denote the number of ancilla qubits $a=\lceil\log m\rceil+1$, then the circuit depth is
    \[
    \begin{aligned}
    O\left(n m^2(\log m)^2  L (1/{\eta})^{1 /(2 m)}\right)&=O\left(L \log\left(\frac{\alpha}{\eps}\right) 4^a \left(\frac{\alpha \log(\alpha/\eps)}{\eps}\right)^{1/{2^a}}\right)\\
    &=\widetilde{O}\left(4^a L \left(\frac{\alpha}{\eps}\right)^{\frac{1}{2^{a}}} \right).
    \end{aligned}
    \]
\end{proof}

\begin{cor}
\label{thm:MPF}
    Let $A=\sum_{j=1}^L \alpha_j H_j$ be a Hermitian decomposition with $\alpha_j \in \mathbb{R}_{\geq 0}$ and $\|H_j\|=1$. Let $\alpha=\sum_j \alpha_j$.
    Then there is a quantum circuit that implements an $(\alpha\pi/2, O(\log\log(\alpha/\eps)), \eps)$-block-encoding of $A$. The cost and the circuit depth are
    \be
    \widetilde{O}(L)
    \ee
    assuming Hamiltonian simulation of each $H_j$ costs $O(1)$.
\end{cor}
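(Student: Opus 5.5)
This corollary follows from \cref{thm:MPF-general} by choosing the number of ancilla qubits $a$ so that the precision-dependent factor $(\alpha/\eps)^{1/2^a}$ is a constant, while $4^a$ stays polylogarithmic. Concretely, I would take
\[
a=\left\lceil \log_2\log_2(\alpha/\eps)\right\rceil+1 ,
\]
which is $O(\log\log(\alpha/\eps))$ and matches the ancilla count claimed in the statement. Equivalently, in the language of the proof of \cref{thm:MPF-general}, this takes the MPF order parameter $m=2^{a-1}=\Theta(\log(\alpha/\eps))$. This agrees with the choice $m=\lceil\frac12\log(\mu T/\eps)\rceil$ used in \cref{lem:mpf}, with $T=1$ and constant $\mu$ after rescaling $\tilde A=A/\alpha$.

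With this choice, $2^a\ge 2\log_2(\alpha/\eps)$, so
\[
\left(\frac{\alpha}{\eps}\right)^{1/2^a}\le 2^{1/2}=O(1).
\]
The same holds for the slightly larger quantity $\big(\alpha\log(\alpha/\eps)/\eps\big)^{1/2^a}$ that appears inside the proof of \cref{thm:MPF-general} before the $\widetilde O$ is taken. That quantity is at most $2^{1/2}\cdot(\log(\alpha/\eps))^{1/(2\log(\alpha/\eps))}=O(1)$. On the other side, $4^a=O\big((\log(\alpha/\eps))^2\big)$, and the remaining factors are $n=O(\log(\alpha/\eps))$ and $(\log m)^2=O((\log\log(\alpha/\eps))^2)$. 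Substituting into the explicit bound
\[
O\big(n\, m^2(\log m)^2 L\,(1/\eta)^{1/(2m)}\big)
\]
gives depth
\[
O\big(L\log^3(\alpha/\eps)(\log\log(\alpha/\eps))^2\big)=\widetilde O(L).
\]
Here $\widetilde O$ hides factors polylogarithmic in $\alpha/\eps$. The normalization $\alpha\pi/2$ and the error $\eps$ are inherited unchanged from \cref{thm:MPF-general}.

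The main obstacle is that the $\widetilde O$ in \cref{thm:MPF-general} already hides $\log(\alpha/\eps)$ factors inside the $1/2^a$-th power. So I would work from the explicit expression in that proof rather than the $\widetilde O$ statement, and check that the hidden factor $\log(\alpha/\eps)^{1/2^a}$ is $O(1)$ uniformly. It is, since $x^{1/x}\le e^{1/e}$. A smaller point is that \cref{lem:mpf} is stated as succeeding with probability at least $1-\eps$. I would simply reuse the paper's treatment in \cref{cor:multiproduct HS} and \cref{thm:MPF-general}, which use this as a coherent $\eta$-approximation, so no additional argument is needed at the corollary level.
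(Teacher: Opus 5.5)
Your proposal is correct and follows essentially the same route as the paper. The paper also substitutes into the explicit depth bound $O\big(n\,m^2(\log m)^2L(1/\eta)^{1/(2m)}\big)$ from the proof of \cref{thm:MPF-general}, taking $m=\lceil\frac12\log(1/\eta)\rceil$ so that $a=\lceil\log m\rceil+1=O(\log\log(\alpha/\eps))$ and $(1/\eta)^{1/(2m)}=O(1)$, which gives depth $O\big(L\log(\alpha/\eps)\log^2(1/\eta)(\log\log(1/\eta))^2\big)=\widetilde O(L)$; your choice of $m$ through $\alpha/\eps$ instead of $1/\eta$ differs only by a constant factor, and your check that the extra factor $(\log(\alpha/\eps))^{1/(2m)}$ is $O(1)$ correctly accounts for that difference.
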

\begin{proof}
    Combining~\cref{lem:mpf} and~\cref{thm:MPF-general}, we choose $m = \left\lceil \frac{1}{2} \log \left({1}/{\eta}\right) \right\rceil,$ then the number of ancilla qubits is $a=\lceil\log m\rceil+1=O(\log\log(1/\eta))=O(\log\log(\alpha/\eps))$, and the circuit depth is
    \[
    O\left( L \log\left(\frac{\alpha}{\eps}\right) \left(\log \left(\frac{\alpha \log(\alpha/\eps)}{\eps}\right)\right)^2 \left(\log\log \left(\frac{\alpha \log(\alpha/\eps)}{\eps}\right)\right)^2\right)=\widetilde{O}(L).
    \]
\end{proof}

\subsection{Low-ancilla QSVT}

One advantage of our constructions is that they yield coherent block encodings. Therefore, they can be directly used as input oracles in the standard QSVT framework. 

We first recall the following standard QSVT result.
\begin{lem}[Theorem 56 of \cite{gilyen2019qsvt}]
\label{lem:qsvt}
    Suppose that $U$ is an $(\alpha, a, \varepsilon)$-block encoding of a Hermitian matrix $A$. Let $P(x) \in \mathbb{R}[x]$ be a degree-$d$ polynomial satisfying $\left|P(x)\right| \leq {1}/{2}$ for all $x \in[-1,1]$.
    Then there is a quantum circuit $\widetilde{U}$, which is an $(1, a+2,4 d \sqrt{\varepsilon / \alpha})$-block encoding of $P(A / \alpha)$, and consists of $d$ applications of $U$ and $U^{\dagger}$, a single application of controlled-$U$ and ${O}((a+1) d)$ other one- and two-qubit gates.
\end{lem}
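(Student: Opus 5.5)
The plan is to follow the standard QSVT pipeline. First, implement $P$ exactly on whatever operator $U$ actually encodes. Then control the error using a robustness (perturbation) bound for singular value transformation.

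Write $B := (\bra{0^a}\otimes I)U(\ket{0^a}\otimes I)$. The block-encoding assumption gives $\|A/\alpha - B\| \le \varepsilon/\alpha$, and $\|B\|\le 1$ since $U$ is unitary. The point is that the circuit never sees $A$; it only sees $B$. So I will build an exact block encoding of $P^{(SV)}(B)$, the polynomial singular value transformation of $B$, which coincides with $P(B)$ when $B$ is Hermitian. Afterwards I compare it with $P(A/\alpha)$.

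\textbf{Step 1 (definite-parity real polynomials).} For a real polynomial $R$ of definite parity with $|R|\le 1$ on $[-1,1]$, I will invoke the QSVT/QSP characterization. It provides phases $\Phi$ for which the alternating-phase sequence built from $U$, $U^\dagger$ and $\ket{0^a}$-controlled single-qubit rotations on one extra qubit block-encodes $R_{\mathbb C}^{(SV)}(B)$. Here $R_{\mathbb C}$ is a complex polynomial with $\mathrm{Re}\,R_{\mathbb C}=R$. The phases $-\Phi$ give $\overline{R_{\mathbb C}}$. Averaging the two sequences with an LCU on a second extra qubit yields $R^{(SV)}(B)$ exactly. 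Because the two sequences share all uses of $U$ and differ only in the rotations, the only place $U$ must be controlled is in handling the odd/even length bookkeeping, which is the single controlled-$U$ in the statement.

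\textbf{Step 2 (removing the parity constraint).} Split $P=P_{\rm even}+P_{\rm odd}$. Since $|P|\le 1/2$, we have $|P_{\rm even}(x)|=|P(x)+P(-x)|/2\le 1/2$, and likewise for $P_{\rm odd}$. Hence $2P_{\rm even}$ and $2P_{\rm odd}$ are bounded by $1$ and can be implemented as in Step 1. A further equal-weight LCU of these two gives $(2P_{\rm even}+2P_{\rm odd})/2=P$ with normalization $1$; this is exactly why the hypothesis is $1/2$ rather than $1$. I will merge this LCU qubit with the real-part LCU qubit, keeping the total at $a+2$ ancillas. Counting gates gives $d$ uses of $U,U^\dagger$ and $O((a+1)d)$ gates, the latter coming from the $\ket{0^a}$-controlled rotations.

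\textbf{Step 3 (robustness).} It remains to show $\|P^{(SV)}(B)-P(A/\alpha)\|\le 4d\sqrt{\varepsilon/\alpha}$. I expect this to be the main obstacle, because $B$ need not be Hermitian or normal, and standard Lipschitz/Markov estimates would give a bound of order $d^2\varepsilon$ rather than $d\sqrt{\varepsilon}$. My first attempt: write the even part as $P_{\rm even}(x)=Q(x^2)$, with $P_{\rm even}^{(SV)}(B)=Q(B^\dagger B)$, and the odd part as $B\,Q'(B^\dagger B)$. Then I would bound the differences through $\|B^\dagger B-(A/\alpha)^2\|\le 2\varepsilon/\alpha$. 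I would also use a Bernstein-type inequality for polynomials in $y=x^2$ on $[0,1]$, namely $|Q'(y)|\lesssim d/\sqrt{y(1-y)}$. The square-root behaviour near the endpoints, after integrating along the segment between the two operators via a resolvent or divided-difference argument, is what should convert $\varepsilon$ into $\sqrt{\varepsilon}$ with a linear factor $d$. Collecting constants from the two parity parts should give the factor $4$.
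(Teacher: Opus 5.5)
Your Steps~1 and~2 match the construction behind the cited result; the paper itself gives no proof and only imports Theorem~56 of \cite{gilyen2019qsvt}. The genuine gap is Step~3, which carries the entire error claim and is only a heuristic. The route you suggest would not close it.

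\textbf{Why Step~3 fails as sketched.} The Bernstein estimate is a scalar H\"older-$\tfrac12$ bound. Turning it into a bound on $\|Q(B^\dagger B)-Q(H^2)\|$ needs Aleksandrov--Peller-type operator H\"older theory. A resolvent or divided-difference argument sees only Lipschitz data ($O(d^2)$ by Markov), and it still needs a Schur-multiplier estimate you do not supply. The odd part $B\,Q_1(B^\dagger B)-H\,Q_1(H^2)$, with $Q_1(y)=P_{\rm odd}(\sqrt{y})/\sqrt{y}$ as large as $\sim d$, is messier still. Nothing in the sketch produces the constant $4$.

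Incidentally, $d^2\varepsilon/\alpha$ would not be a weaker target. For $\|A\|\le\alpha$ the error is at most $3/2$, so the claim holds trivially unless $d\sqrt{\varepsilon/\alpha}<3/8$, and in that regime $d^2\varepsilon/\alpha<d\sqrt{\varepsilon/\alpha}$. The real obstacle is getting any operator-level bound for non-normal $B$, not the exponent.

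\textbf{The missing idea.} The robustness lemma of \cite{gilyen2019qsvt} argues with the circuit instead of the functional calculus. The projected block of the QSVT sequence depends only on the encoded block. So for $H=A/\alpha$ with $\|H\|\le 1$, you may replace $U$ by the canonical dilations
\[
U_X=\begin{bmatrix} X & \sqrt{I-XX^\dagger}\\ \sqrt{I-X^\dagger X} & -X^\dagger\end{bmatrix},\qquad X\in\{B,H\}.
\]
The sequence is a product of $d$ factors $U_X^{\pm 1}$ interleaved with fixed unitaries. Telescoping therefore gives $\|P^{(SV)}(B)-P^{(SV)}(H)\|\le d\,\|U_B-U_H\|$.

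For positive semidefinite $S,T$ one has $\|\sqrt{S}-\sqrt{T}\|\le\sqrt{\|S-T\|}$. Set $\eta:=\|B-H\|\le\min(\varepsilon/\alpha,2)$. Then $\|U_B-U_H\|\le\eta+\sqrt{2\eta}\le 4\sqrt{\eta}$. This bound survives averaging over $\pm\Phi$ and over the two parity branches. Finally, $P^{(SV)}(H)=P(H)$ for Hermitian $H$ and definite parity, which finishes the proof. This argument is where both the linear dependence on $d$ and the $\sqrt{\varepsilon}$ come from.

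\textbf{A smaller slip in Step~2.} The even/odd selector cannot be merged with the real-part selector. Suppose two branches of opposite parity average to a real $P$. Their imaginary parts would then have to vanish separately, forcing $2P_{\rm even}$ and $2P_{\rm odd}$ themselves to be QSP-achievable. That fails unless they have modulus $1$ at $x=\pm 1$, so your route would need four branches.

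The count $a+2$ comes instead from letting the ancilla that implements the $\ket{0^a}$-controlled phases double as the $\pm\Phi$ selector. Conjugating $e^{-i\phi Z}$ by $C_\Pi\mathrm{NOT}$ gives $e^{i\phi(2\Pi-I)}$ if that ancilla starts in $\ket{0}$, and $e^{-i\phi(2\Pi-I)}$ if it starts in $\ket{1}$. Preparing it in $\ket{+}$ therefore runs both sequences at no extra cost, and the single further qubit selects even versus odd.
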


Combining this with our result in Subsection~\ref{subsec:be-with-trotter}, we obtain a low-ancilla implementation of QSVT for operators expressed as Hermitian decompositions.

\begin{thm}[Low-ancilla QSVT via Trotterization]
\label{thm:low-ancilla-qsvt}
    Assume that $A=\sum_{j=1}^L \alpha_j H_j$ is a Hermitian decomposition with $\alpha_j \in \mathbb{R}_{\geq 0}$ and $\|H_j\|=1$. Let $\alpha=\sum_j \alpha_j$ and $k \in \mathbb{N}$. 
    Let $P(x) \in \mathbb{R}[x]$ be a degree-$d$ polynomial satisfying $\left|P(x)\right| \leq {1}/{2}$ for all $x \in[-1,1]$.
    Then there is a quantum circuit that implements an $(1, 3,\eps)$-block encoding of $P(2A / \alpha\pi)$, with circuit depth 
    \be
    \widetilde{O}\left(
    \frac{5^{k-1}L d^{1+1/k}}{\eps^{1/k}}
     \right),
    \ee
    assuming Hamiltonian simulation of each $H_j$ costs $O(1)$.
\end{thm}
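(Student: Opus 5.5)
The plan is to feed the single-ancilla block encoding of \cref{thm:be-with-trotter} into the QSVT construction of \cref{lem:qsvt}, and then tune the precision of the inner block encoding to the outer target $\eps$.

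\textbf{Step 1: the inner block encoding.} By \cref{thm:be-with-trotter}, for a precision $\eps'$ to be chosen, there is a $(\alpha\pi/2,1,\eps')$-block encoding $U$ of $A$. Its depth is $\widetilde O\big(5^{k-1}L(\alpha/\eps')^{1/(2k)}\big)$.

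\textbf{Step 2: apply QSVT.} Applying \cref{lem:qsvt} with normalization $\alpha\pi/2$ and $a=1$ yields a $\big(1,3,4d\sqrt{\eps'/(\alpha\pi/2)}\big)$-block encoding of $P\big(A/(\alpha\pi/2)\big)=P(2A/\alpha\pi)$. This gives the claimed ancilla count of $1+2=3$. The circuit uses $d$ applications of $U$ and $U^\dagger$, one controlled-$U$, and $O(d)$ extra gates. Controlling $U$ only requires controlling its constituent gates: the single-qubit rotations and the controlled Trotter evolutions inside GQSP. This does not change the depth asymptotically, given the assumption on the cost of simulating each $H_j$.

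\textbf{Step 3: choose $\eps'$ and count depth.} To make the error at most $\eps$, I will set
\[
4d\sqrt{2\eps'/(\alpha\pi)}=\eps, \qquad\text{i.e.}\qquad \eps'=\Theta\big(\alpha\eps^2/d^2\big).
\]
Then $\alpha/\eps'=\Theta(d^2/\eps^2)$, and the factor $\alpha$ cancels. This cancellation is the key point: the normalization in \cref{lem:qsvt} is measured relative to $\alpha$. The depth of $U$ becomes
\[
\widetilde O\big(5^{k-1}L(d/\eps)^{1/k}\big),
\]
and multiplying by the $O(d)$ uses gives
\[
\widetilde O\big(5^{k-1}L\,d^{1+1/k}\eps^{-1/k}\big).
\]
The polylogarithmic factors from \cref{thm:be-with-trotter} are now logarithms of $d/\eps$, so they are absorbed into $\widetilde O$.

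\textbf{Main obstacle.} I expect the main obstacle to be the precision bookkeeping. One must check that \cref{lem:qsvt} applies with an approximate block encoding whose normalization $\alpha\pi/2$ differs from $\|A\|$, and that the square-root error amplification is exactly what produces the exponents $d^{1+1/k}$ and $\eps^{-1/k}$. A secondary issue is verifying that the $\alpha$-dependence cancels, rather than leaving a residual $\alpha^{1/(2k)}$ factor. I will confirm this by tracking $\tilde\eps=\eps'/\alpha$ in the proof of \cref{thm:be-with-trotter}.
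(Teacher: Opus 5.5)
Your proposal is correct and follows the paper's proof. Both invoke \cref{thm:be-with-trotter} at inner precision $\Theta(\alpha\eps^2/d^2)$, apply \cref{lem:qsvt} with normalization $\alpha\pi/2$ so that $\alpha$ cancels in $(\alpha/\eps')^{1/(2k)}$, and multiply the inner depth by $d$. Your extra remarks, on implementing the single controlled-$U$ and on the logarithmic factors becoming logarithms of $d/\eps$, are details the paper leaves implicit.
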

\begin{proof}
    This follows from \cref{thm:be-with-trotter} and \cref{lem:qsvt}. 
    By \cref{thm:be-with-trotter}, we can construct an $(\alpha\pi/2, 1, \eta)$-block-encoding $U$ of $A$, with circuit depth $\widetilde{O}(5^{k-1}L (\alpha/\eta)^{1/2k})$.
    Then applying Lemma~\ref{lem:qsvt} yields an $(1, 3, 4d\sqrt{2\eta/ \alpha\pi})$-block encoding of $P(2A / \alpha\pi)$, using $d$ applications of $U$ and $U^\dagger$. Choose $\eta=\Theta(\alpha\eps^2/d^2)$ so that $4d\sqrt{2\eta/ \alpha\pi}\leq \eps$.
    The resulting circuit depth is therefore $$\widetilde O(d 5^{k-1}L(\alpha/\eta)^{1/(2k)})=\widetilde{O}(5^{k-1}L d^{1+1/k}/\eps^{1/k}).$$
\end{proof}

\begin{rmk}
    The above QSVT implementation is coherent: it gives a block encoding of $P(2A / \alpha\pi)$. Optimizing the Trotter order $k$ gives the coherent depth bound $\widetilde O(Ld(d/\varepsilon)^{o(1)})$.
    Moreover, if the final goal is to estimate the expectation value, then one can further combine the present Hamiltonian-evolution-based construction with the Richardson extrapolation technique shown in~\cite{chakraborty2025quantum}, since our construction fits into the interleaved sequence framework proposed there. In this incoherent setting, the $o(1)$ dependence on $1/\eps$ can be improved to $\polylog(1/\eps)$. Hence the circuit depth of the resulting incoherent estimator becomes $\widetilde{O}\left(L d^{1+o(1)}\right)$.
\end{rmk}

\begin{rmk}
    The same argument can be applied to the construction based on the multiproduct formula in Subsection~\ref{subsec:be-with-mpf}. 
    Under the same assumptions as in~\cref{thm:low-ancilla-qsvt}, to implement QSVT with accuracy $\eps$, it suffices to use \cref{thm:MPF} with block-encoding accuracy $\eta=\Theta(\alpha\varepsilon^2/d^2)$.
    This yields an $\eps$-approximate block encoding of $P(2A/(\alpha\pi))$ with circuit depth $\widetilde O( Ld)$, improving the precision dependence to polylogarithmic, at the cost of increasing the ancilla overhead to $O(\log\log(\alpha d/\varepsilon))$.
\end{rmk}

Taken together, the above results show that our coherent block encodings allow one to directly use standard QSVT to implement polynomial transformations of $A$.
This enables spectral transformations beyond real-time Hamiltonian simulation, including filters for ground-state preparation, resolvents for Green's function estimation, and Gibbs-type transformations for thermal-state preparation~\cite{gilyen2019optimizing,martyn2021grand}. 
Thus, our method converts efficiently implementable Hamiltonian evolutions into coherent block encodings with low ancilla overhead, and hence enables a broad class of QSVT-based spectral transformations.

\subsection{Examples and applications}

We briefly discuss some classes of Hamiltonians for which our construction is naturally applicable. This illustrates a common situation where the Hamiltonian is given by a Hermitian decomposition, with individual evolutions that are easier to implement than block encodings of all local terms.

A particularly useful setting is provided by local Hamiltonians whose terms can be grouped into a small number of commuting layers. Suppose $H=\sum_{a=1}^{L} h_a$ is a local Hamiltonian, where each term acts on $O(1)$ qubits and each qubit participates in only $O(1)$ terms. Such bounded-overlap assumptions are common in local Hamiltonian problems and appear, for instance, in the setting of the detectability lemma~\cite{aharonov2009detectability}.
They also include many bounded-degree geometrically local spin systems, such as transverse-field Ising model and Heisenberg model.

Under this condition, the interaction hypergraph can be colored using a constant number of colors, so that terms of the same color have disjoint support, and hence commute.
Consequently, the Hamiltonian can be written as $H=\sum_{\ell=1}^g H^{(\ell)},$
where $g=O(1)$ and each $H^{(\ell)}=\sum_{a\in C_\ell}h_a$ is a sum of mutually commuting local terms.
The evolution of each layer factors as
\[
    e^{-itH^{(\ell)}}=\prod_{a\in C_\ell}e^{-ith_a},
\]
and can be implemented by local gates, often in parallel.
After normalizing each layer, this gives a Hermitian decomposition
\[
    H=\sum_{\ell=1}^g \alpha_\ell \widetilde H^{(\ell)}
\]
with only a constant number of components, where $L=\sum_{\ell=1}^g |C_\ell|$.
Our construction is well suited to this setting, since it uses Hamiltonian evolution of the structured layers rather than an LCU selection over all individual local terms.

There are several natural ways to construct a block encoding of such $H$:
\begin{itemize}
    \item First, applying the standard LCU construction~\cite{berry2015simulating} to the original decomposition  $H=\sum_{a=1}^{L} h_a$. One may need to first block encode each individual local term $h_a$, then use an $L$-term LCU to add them up.  
    This uses $\lceil\log L\rceil$ ancilla qubits, in addition to the ancillas used for the individual block encodings, and has gate count $O(L)$, assuming block encoding of each term costs $O(1)$.

    \item Second, after regrouping the Hamiltonian as $H=\sum_{\ell=1}^g \alpha_\ell \widetilde H^{(\ell)}$, one may first construct a block encoding of each layer $\widetilde H^{(\ell)}$ using~\cref{thm:arcsin-constru}, and then combine the \(g\) block encodings by LCU.
    Since $g=O(1)$, this uses $O(1)$ ancilla qubits. The total gate count is $\widetilde{O}(L)$, up to polylogarithmic factors in the target precision.

    \item Third, one may apply our single-ancilla construction (\cref{thm:be-with-trotter}) directly to the grouped Hermitian decomposition $H=\sum_{\ell=1}^g \alpha_\ell \widetilde H^{(\ell)}$. This gives a one-ancilla block encoding with gate count $\widetilde{O}(L(\alpha/\eps)^{o(1)})$, where $\alpha=\sum_{\ell=1}^g\alpha_\ell$.
\end{itemize}

Thus, in this commuting-layer setting, our direct construction reduces the ancilla overhead to a single qubit, at the cost of a slightly worse precision dependence in the gate count.

\section{Conclusions and outlook}

The central contribution of this work is a single-ancilla conversion from controlled Hamiltonian evolution to a block encoding of the underlying Hamiltonian, based on GQSP.
We instantiate this conversion for operators given by Hermitian decompositions $A=\sum_{j} \alpha_j H_j$.
Using higher-order product formulas, we obtain a strictly single-ancilla approximate block encoding with near-optimal circuit depth. Using multiproduct formulas, we improve the precision dependence to polylogarithmic, at the cost of $O(\log\log(1/\varepsilon))$ ancilla qubits.

The ultimate goal would be to construct an $\eps$-approximate block encoding using $1$ or $O(1)$ ancilla qubits, with circuit depth polylogarithmic in $1/\varepsilon$.
Our results do not yet achieve all these requirements simultaneously. Instead, they indicate a natural tradeoff between ancilla overhead and circuit depth: one can keep the ancilla overhead strictly minimal with a slightly worse precision dependence, or improve the precision dependence by allowing a small $O(\log\log(1/\varepsilon))$ ancilla overhead. Understanding whether this tradeoff is inherent remains an important open problem.

One possible direction is to prove lower bounds or no-go results for the ancilla-depth tradeoff in approximate block encoding or Hamiltonian simulation. 
A lower bound showing that one cannot simultaneously achieve $O(1)$ ancillas and $\polylog(1/\eps)$ dependence in a general model would clarify the limitations of low-ancilla constructions. Conversely, the absence of such a lower bound would suggest that better constructions may still be possible.

Another direction is to find more powerful constructive methods. One tempting analogy is the Solovay--Kitaev theorem~\cite{dawson2005solovay,kitaev2002classical}, which improves approximation accuracy by forming carefully chosen products of available operations with $\polylog(1/\eps)$ sequence length. 
However, directly applying Solovay--Kitaev-type ideas brings serious obstacles, especially because the dimension dependence is substantial, whereas the desired construction should work uniformly for high-dimensional Hamiltonians. This suggests a more tailored question: whether there is a Solovay--Kitaev-type theorem for GQSP or related signal processing structures, where the synthesis takes place in a low-dimensional signal representation while still acting on high-dimensional Hamiltonians. 
Recent work on a Solovay--Kitaev theorem for QSP~\cite{rossi2025solovay} suggests that this might be possible, although extending it to the present GQSP-based block-encoding setting remains an interesting open problem.

A further possible extension is to generalize the interleaved-sequence framework itself. In the GQSP construction, the basic building block is the full Hamiltonian evolution, $e^{iA}$. For a Hermitian decomposition $A=\sum_j \alpha_j H_j$, one could instead consider more general interleaved sequences of single-qubit rotations and individual term evolutions $e^{iH_j t}$. Such a framework would be more general than MPF while still having strong algebraic structures. It may also provide a natural setting for studying how ancillas help in Hamiltonian simulation and block encoding, and for understanding the tradeoff between ancilla overhead and circuit depth. The main difficulty is that this leads to a genuinely multivariate transformation problem, where one must control functions of several noncommuting operators rather than a single variable. Developing such a multivariate theory could lead to better low-ancilla constructions, or to a clearer explanation of why the tradeoffs observed in this work are unavoidable.

Overall, our results show that efficiently implementable Hamiltonian evolutions can be converted into coherent block encodings with very small ancilla overhead. This provides a low-ancilla way to connect Hamiltonian simulation with standard QSVT-based spectral transformations, and motivates a more systematic study of ancilla-depth tradeoffs in block encoding, Hamiltonian simulation, and quantum signal processing.

\section*{Acknowledgements}

The research was supported by the National Key Research Project of China under Grant No. 2025YFA1017200. We would like to thank Dong An, Shantanav Chakraborty, and Zane Rossi for their helpful discussions.



\bibliographystyle{alphaurl}
\bibliography{qsvt}

\end{document}